\renewcommand{\subsection}{\@startsection
{subsection}{2}{0mm}{\baselineskip}{-0.25cm}
{\normalfont\normalsize\em}}
\newtheorem{theorem}{Theorem}[section]
\newtheorem{proposition}[theorem]{Proposition}
\newtheorem{lemma}[theorem]{Lemma}
\theoremstyle{definition}
\newtheorem{example}[theorem]{Example}
\theoremstyle{remark}
\def\codim{\ensuremath{r}}
\def\MDSr{\ensuremath{t}}
\def\dry{\ensuremath{\delta}}
\begin{document}
\title{Wet paper codes and the dual distance in steganography}
\author{C. Munuera and M. Barbier}
\thanks{{\em 2000 Mathematics Subject Classification.} 94A60, 94B60.}
\thanks{{\em Key words and phrases.} Steganography, Error-correcting codes, wet paper codes, dual distance.}
\thanks{
C. Munuera is with the Department of Applied Mathematics, University of Valladolid,
Avda Salamanca SN, 47014 Valladolid, Castilla, Spain.
M. Barbier is with the Computer Science laboratory,  \'Ecole Polytechnique, 91 128 Palaiseau CEDEX,
France. INRIA Saclay, \^Ile de France} 

\sloppypar

\begin{abstract}
  In 1998 Crandall introduced a method based on coding theory to
  secretly embed a message in a digital support such as an
  image. Later Fridrich \textit{et al.} improved this method to
  minimize the distortion introduced by the embedding; a process called
  \textit{wet paper}. However, as previously emphasized in the literature, this
  method can fail during the embedding step. Here we find 
  sufficient and necessary conditions to guarantee a successful embedding by studying
  the dual distance of a linear code. Since these results
  are essentially of combinatorial nature, they can be generalized
  to systematic codes,  a large family containing all linear
  codes. We also compute the exact number of solutions and point out 
  the relationship between wet paper codes and orthogonal arrays. 
\end{abstract}

\thanks{This work was supported in part by Junta de CyL under grant VA065A07 and by Spanish Ministry for Science and Technology under grants MTM2007-66842-C02-01 and  MTM 2007-64704.}

\maketitle


\section{Introduction}

Steganography is the science of transmitting messages in secret, so
that no one other than the sender and receiver may detect the
existence of hidden data. It is realized by embedding the information
into innocuous cover objects, as digital images. To carry out this
process, the sender first extracts a sequence $c_1,\dots,c_n$, of $n$
bits from the image, e.g. the least significant bits of $n$ pixels
gray values. The {\em cover vector} $\mathbf{c}=(c_1,\dots,c_n)$
is modified according to a certain algorithm for storing a {\em secret
message} $m_1,\dots,m_\codim$. Then $c_1,\dots,c_n$ are
replaced by modified $x_1,\dots,x_n$ in the {\em cover image} which is
sent through the channel. By using the modified vector $\mathbf{x}$
the receiver is able to recover the hidden information. The embedding
and recovering algorithms form the {\em steganographic scheme} of this
system. Formally,
a {\em steganographic scheme} (or {\em stegoscheme}) $\mathcal{S}$ of
{\em type} $[n,r]$ over the binary alphabet $\mathbb{F}_2$ is a pair
of functions $({\rm emb},{\rm rec})$.
By using the {\em embedding function} ${\rm
  emb}:\mathbb{F}_2^n\times\mathbb{F}_2^\codim\rightarrow\mathbb{F}_2^n$
the secret message  $\mathbf{m}\in\mathbb{F}_2^\codim$ is hidden in the
cover vector $\mathbf{c}\in\mathbb{F}_2^n$ as $\mathbf{x}={\rm
  emb}(\mathbf{c},\mathbf{m})$  and subsequently recovered by the
receiver with the {\em recovering function} ${\rm
  rec}:\mathbb{F}_2^n\rightarrow \mathbb{F}_2^\codim$ as ${\rm
  rec}(\mathbf{x})$ whenever these functions verify that
${\rm rec}({\rm emb}(\mathbf{c},\mathbf{m})) =\mathbf{m}$ for all
$\mathbf{c}\in\mathbb{F}_2^n$ and $\mathbf{m}\in\mathbb{F}_2^\codim$.

Steganographic schemes are closely related to error correcting
codes. Given a $[n,\codim]$ stego\-scheme $\mathcal{S}$,  for each
$\mathbf{m}\in\mathbb{F}_2^\codim$ we consider the code
$\mathcal{C}_{\mathbf{m}}=\{ \mathbf{x}\in\mathbb{F}_2^n : {\rm
  rec}(\mathbf{x})=\mathbf{m}\}$. Then the family $\{
\mathcal{C}_{\mathbf{m}} : \mathbf{m}\in\mathbb{F}_2^\codim  \}$ gives a
partition on $\mathbb{F}_2^n$ and for all
$\mathbf{m}\in\mathbb{F}_2^\codim$ the mapping
${\rm dec}_{\mathbf{m}}:\mathbb{F}_2^n\rightarrow{\mathcal
  C}_{\mathbf{m}}$ defined by  
${\rm dec}_{\mathbf{m}}(\mathbf{c})={\rm emb}(\mathbf{c},\mathbf{m})$
is a decoding map for the code $\mathcal{C}_{\mathbf{m}}$. Conversely
let $\{\mathcal{C}_\mathbf{m} : \mathbf{m}\in\mathbb{F}_2^\codim\}$ be a partition of $\mathbb{F}_2^n$ and for each
$\mathbf{m}\in\mathbb{F}_2^\codim$ let ${\rm dec}_{\mathbf{m}}$ be a
minimum distance decoding map of $\mathcal{C}_\mathbf{m}$. Consider ${\rm
  emb}:\mathbb{F}_2^n\times\mathbb{F}_2^\codim\rightarrow\mathbb{F}_2^n$ and
${\rm rec}:\mathbb{F}_2^n\rightarrow\mathbb{F}_2^\codim$ defined by 
${\rm emb}(\mathbf{c},\mathbf{m})={\rm dec}_{\mathbf{m}}(\mathbf{c})$
and ${\rm rec}(\mathbf{x})=\mathbf{m}$ if $\mathbf{x}\in
\mathcal{C}_\mathbf{m}$. Then $\mathcal{S}=({\rm emb, rec})$ is a
$[n,\codim]$ stegoscheme. 
As a consequence,  the following objects are equivalent 
\begin{itemize}
\item a $[n,\codim]$ stegoscheme $\mathcal{S}=({\rm emb, rec})$;
\item a family $\{(\mathcal{C}_\mathbf{m},{\rm dec}_{\mathbf{m}}) :
  \mathbf{m}\in\mathbb{F}_2^\codim\}$ where $\{\mathcal{C}_\mathbf{m} :
  \mathbf{m}\in\mathbb{F}_2^\codim\}$ gives a partition of $\mathbb{F}_2^n$
  and  for every $\mathbf{m}$, ${\rm dec}_{\mathbf{m}}$ is a minimum
  distance decoding map for $\mathcal{C}_{\mathbf{m}}$.
\end{itemize}

Since all vectors $\mathbf{c}$ and $\mathbf{m}$ are in principle equiprobable, it
is desirable that all codes $\mathcal{C}_{\mathbf{m}}$ have the same
cardinality.
The above equivalence has been extensively exploited to make
stegoschemes that minimize the embedding distortion caused in the
cover. Crandall \cite{crandall} proposed the use of linear codes
$\mathcal{C}$ and the partition of $\mathbb{F}_2^n$ into cosets \mbox{$\{
\mathbf{x}+\mathcal{C} : \mathbf{x}\in \mathbb{F}_2^n\}$}.
The obtained method is currently known as {\em matrix encoding}.  If
$H$ is a parity check matrix for $\mathcal{C}$ and dec is syndrome
decoding (see \cite{vLint} as a general reference for all facts
concerning error correcting codes), the obtained embedding and
recovering maps are  
\mbox{${\rm emb}(\mathbf{c},\mathbf{m})= \mathbf{c}-{\rm
    cl}(\mathbf{c}H^T-\mathbf{m})$} and ${\rm
  rec}(\mathbf{x})=\mathbf{x}H^T$, where ${\rm cl}(\mathbf{z})$
denotes the {\em leader} of the coset whose syndrome is
$\mathbf{z}$, that is the element of smallest Hamming weight whose
syndrome is $\mathbf{z}$ .
Matrix encoding has proved to be very efficient to
minimize the embedding distortion in the cover, see
\cite{covering,fontaine+galand,schonfeld,SW2,zhang}.

To reduce the chance of being detected by third parties, the
changeable pixels in the cover image should be selected according to
the characteristics of the image and the message to hide. In this case the
recovering of the hidden data is more difficult, since the receiver
does not know what pixels store information. Wet paper codes are
designed to lock some components of the cover vector, preventing its
modification in the embedding process.
Mathematically wet paper codes can be  explained as follows: 
imagine we want to embed a message
$\mathbf{m}=(m_1,\dots,m_\codim)\in\mathbb{F}_2^\codim$ into a cover
vector  $\mathbf{c}=(c_1,\dots,c_n)\in\mathbb{F}_2^n$. However, not
all of coordinates of $\mathbf{c}$ can be used for hiding information:
there is a set $\mathcal{D}\subseteq \{ 1,\dots,n\}$ of $\dry\ge
\codim$ {\em dry} coordinates that may be {\em freely} modified by the
sender, while the other $\ell=n-\dry$ coordinates are {\em wet} (or
{\em locked}) and can not be altered during the embedding process. Let
$\mathcal{W}=\{1,\dots,n\}\setminus \mathcal{D}$. The sets
$\mathcal{D}$ and $\mathcal{W}$ are known to the sender but not to the
receiver.
Using the matrix encoding method we set $\mbox{\rm
  emb}(\mathbf{c},\mathbf{m})=\mathbf{x}\in\mathbb{F}_2^n$ with 
\[
[S]:\left\{ \begin{array}{rl}
\mathbf{x}H^{T}&=\mathbf{m},\\
x_i&= c_i \; \mbox{ if $i\in\mathcal{W}$}
\end{array} \right.
\]
where $H$ is a $\codim\times n$ matrix of full rank $\codim$. Locking
positions minimize the possibility of detection during
transmission but also generates a technical problem,
since it is not guaranteed the existence of solutions for $[S]$.
A natural question is to ask for the minimum number of dry coordinates
(or equivalently, the maximum number of locked coordinates) necessary
(respectively allowed) to make possible this process. 
We define the {\em wet threshold} of $H$ as the minimum number $\tau$ of dry coordinates such that the system $[S]$
has a solution for all
$\mathbf{c}\in\mathbb{F}_2^n,\mathbf{m}\in\mathbb{F}_2^\codim$ and
\mbox{$\mathcal{W}\subseteq \{ 1,\dots,n\}$} with $\#
\mathcal{W}\le n-\tau$. The number of extra dry symbols beyond $r$, $\tau-r$ is the {\em strict overhead} of the system.
It is also of interest to compute the {\em average overhead} $\delta-r$, where $\delta$ is the average minimum number  of dry coordinates such that  $[S]$ has a solution over all possible choices of  
$H,\mathbf{c},\mathbf{m}$ and $\mathcal{W}$.
In detail, we want to determine
\begin{enumerate}
\item  necessary and sufficient conditions to ensure that the system $[S]$ has a solution; 
\item  the probability that $[S]$ has a solution for given $n,\codim$ and $\dry$; 
\item  the average overhead $\delta-\codim$ to have a solution.
\end{enumerate}
These problems have already be treated by several authors. Fridrich,
Goljan and Soukal \cite{fri1,fri2,fri3} studied (2) and showed that we can take
$\codim=\dry+O(2^{-\dry/4})$ as $\dry\rightarrow\infty$, which gives a first answer to (3).
Sch\"onfeld and  Winkler \cite{schonfeld} treated the particular
case of BCH codes, giving detailed computer results.  Barbier, Augot
and Fontaine \cite{BAF} gave sufficient conditions for the existence
of solutions by slightly modifying the problem $[S]$ for linear
codes. The case of Reed-Solomon codes has been treated by Fontaine and
Galand in \cite{fontaine+galand}.  In some of these works the reader
may find a study of the embedding efficiency as well as some
implementation issues. 

The aim of this article is to take another step in this study. We give exact answers to the  questions (1)
in section~\ref{solutions} and (3) in section~\ref{overhead} above,
relating the wet  threshold and overhead to well known parameters of the code having $H$ as parity check matrix, and 
highlighting the role played by the dual distance. The relation with
the weight hierarchy of codes is studied in section
\ref{hierarchy}. Finally in section \ref{systematic} we extend the
matrix encoding method to the broad family of
systematic codes, showing the relationship between stegoschemes,
resilient functions and orthogonal arrays. 
We show that wet paper codes arising from systematic nonlinear codes
may behave better than the ones coming from linear codes, in the sense
that they may require less free positions to ensure the existence of
solution.

\section{A necessary and sufficient condition for the existence of
  solutions}\label{solutions}

Let $\mathbf{c},\mathbf{m},H$ and $\mathcal{W}$ as defined in the
Introduction and let us study the solvability of the linear equation
\[
[S]: \left\{ \begin{array}{rl}
\mathbf{x}H^{T}&=\mathbf{m},\\
x_i& = c_i \; \mbox{ if $i\in\mathcal{W}$.}
\end{array} \right. 
\]
Let $\mathcal{C}$ be the $[n,n-\codim]$ linear code whose parity check
matrix is $H$ and let  $G$ be a generator matrix of
$\mathcal{C}$. Denote by  $\mathcal{C}^{\perp}$ the dual of
$\mathcal{C}$ and by $d^{\perp}=d(\mathcal{C}^{\perp})$ the minimum
distance of $\mathcal{C}^{\perp}$.
For $\mathbf{m}\in\mathbb{F}_2^\codim$, we
shall denote by $\mbox{cl}(\mathbf{m})$ a leader of the coset
$\{\mathbf{x}\in \mathbb{F}_2^n : \mathbf{x}H^T=\mathbf{m}\}$. 
Since $\mathbf{x}H^{T}$ can be interpreted as a syndrome, the system
$[S]$ has a solution if there exists $\mathbf{x}\in
\mbox{cl}(\mathbf{m})+\mathcal{C}$ such that
$\pi_\mathcal{W}(\mathbf{x})=\pi_\mathcal{W}(\mathbf{c})$, where
$\pi_\mathcal{W}$ is the projection over the coordinates of
$\mathcal{W}$. Equivalently $[S]$ has a solution if and only if
$\pi_{\mathcal{W}}(\mathbf{c})\in
\pi_{\mathcal{W}}(\mbox{cl}(\mathbf{m})+\mathcal{C})$.  
For a matrix $M$ with $n$ columns, let $M_\mathcal{W}$ be the matrix
obtained from $M$ by deleting the columns with indexes in $\mathcal{D}$.

\begin{lemma}
For all cosets $\mathbf{y}+\mathcal{C}$, the projections $\pi_\mathcal{W}(\mathbf{y}+\mathcal{C})$ have the same cardinality, $\# \pi_\mathcal{W}(\mathbf{y}+\mathcal{C})=2^{\mbox{\rm\small rank}(G_{\mathcal{W}})}$. Thus $[S]$ has a solution for general $\mathbf{c}$ and $\mathbf{m}$ if and only if the matrix $G_{\mathcal{W}}$ has full rank, 
$\mbox{\rm rank}(G_{\mathcal{W}})=\ell$.
\end{lemma}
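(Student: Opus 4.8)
The plan is to work directly with the coset structure and the projection map $\pi_{\mathcal{W}}$, viewed as a linear map on the ambient space $\mathbb{F}_2^n$. First I would reduce the two assertions to a single statement. Fix any coset $\mathbf{y}+\mathcal{C}$. Since $\pi_{\mathcal{W}}$ is $\mathbb{F}_2$-linear and $\pi_{\mathcal{W}}(\mathbf{y}+\mathcal{C}) = \pi_{\mathcal{W}}(\mathbf{y}) + \pi_{\mathcal{W}}(\mathcal{C})$ is a coset (in $\mathbb{F}_2^\ell$) of the subspace $\pi_{\mathcal{W}}(\mathcal{C})$, its cardinality equals $\#\pi_{\mathcal{W}}(\mathcal{C})$, which is manifestly independent of $\mathbf{y}$. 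So the first claim collapses to computing $\#\pi_{\mathcal{W}}(\mathcal{C})$, and the key identity to establish is $\#\pi_{\mathcal{W}}(\mathcal{C}) = 2^{\mathrm{rank}(G_{\mathcal{W}})}$.

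For that identity I would use the generator matrix $G$ of $\mathcal{C}$: the code is $\mathcal{C} = \{\mathbf{u}G : \mathbf{u}\in\mathbb{F}_2^{n-\codim}\}$, and deleting the columns indexed by $\mathcal{D}$ commutes with this, so $\pi_{\mathcal{W}}(\mathcal{C}) = \{\mathbf{u}G_{\mathcal{W}} : \mathbf{u}\in\mathbb{F}_2^{n-\codim}\}$ is exactly the row space of $G_{\mathcal{W}}$. Its dimension over $\mathbb{F}_2$ is $\mathrm{rank}(G_{\mathcal{W}})$, hence its cardinality is $2^{\mathrm{rank}(G_{\mathcal{W}})}$, and substituting into the previous paragraph gives the stated formula for every coset $\mathbf{y}+\mathcal{C}$.

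For the second assertion, recall from the discussion preceding the lemma that $[S]$ has a solution if and only if $\pi_{\mathcal{W}}(\mathbf{c}) \in \pi_{\mathcal{W}}(\mathrm{cl}(\mathbf{m})+\mathcal{C})$. Since by the first part $\pi_{\mathcal{W}}(\mathrm{cl}(\mathbf{m})+\mathcal{C})$ is a coset of $\pi_{\mathcal{W}}(\mathcal{C})$ inside $\mathbb{F}_2^\ell$, the condition ``$[S]$ has a solution for general $\mathbf{c}$'' (i.e. for every $\mathbf{c}$, equivalently for every value $\pi_{\mathcal{W}}(\mathbf{c})\in\mathbb{F}_2^\ell$) holds if and only if this coset is all of $\mathbb{F}_2^\ell$, which happens precisely when $\#\pi_{\mathcal{W}}(\mathcal{C}) = 2^\ell$, i.e. $\mathrm{rank}(G_{\mathcal{W}}) = \ell$. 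I would note that this is independent of $\mathbf{m}$ as well, since the relevant coset has the same size for every $\mathbf{m}$, so ``general $\mathbf{c}$ and $\mathbf{m}$'' and ``general $\mathbf{c}$, some $\mathbf{m}$'' give the same criterion.

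There is no real obstacle here; the only point demanding a line of care is the equivalence ``$[S]$ solvable for all $\mathbf{c}$'' $\iff$ ``$\pi_{\mathcal{W}}$ restricted to the coset is surjective onto $\mathbb{F}_2^\ell$'', which relies on the observation that as $\mathbf{c}$ ranges over $\mathbb{F}_2^n$ its projection $\pi_{\mathcal{W}}(\mathbf{c})$ ranges over all of $\mathbb{F}_2^\ell$. One should also be mindful that $\ell = n-\dry = \#\mathcal{W}$ is the number of columns of $G_{\mathcal{W}}$, so $\mathrm{rank}(G_{\mathcal{W}})\le\ell$ always, and full rank is the extremal case; this is what makes ``$\mathrm{rank}(G_{\mathcal{W}})=\ell$'' the right phrasing of the surjectivity condition.
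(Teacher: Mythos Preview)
Your proposal is correct and follows essentially the same approach as the paper: use linearity of $\pi_{\mathcal{W}}$ to reduce to $\#\pi_{\mathcal{W}}(\mathcal{C})$, identify $\pi_{\mathcal{W}}(\mathcal{C})$ with the row space of $G_{\mathcal{W}}$, and then argue that solvability for all $\mathbf{c}$ (whose projections exhaust $\mathbb{F}_2^\ell$) forces surjectivity, i.e.\ $\mathrm{rank}(G_{\mathcal{W}})=\ell$. Your write-up is just a more fully spelled-out version of the paper's terse proof.
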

\begin{proof}
$\pi_\mathcal{W}(\mathbf{y}+\mathcal{C})=\pi_\mathcal{W}(\mathbf{y})+\pi_\mathcal{W}(\mathcal{C})$, hence
$\# \pi_\mathcal{W}(\mathbf{y}+\mathcal{C})= \# \pi_\mathcal{W}(\mathcal{C})$ and  $\pi_\mathcal{W}(\mathcal{C})$ is a vector space of dimension  $\mbox{\rm rank}(G_{\mathcal{W}})$. For fixed $\mathbf{c}$ and $\mathbf{m}$,  $[S]$ has a solution if and only if $\pi_{\mathcal{W}}(\mathbf{c})\in \pi_{\mathcal{W}}(\mbox{cl}(\mathbf{m})+\mathcal{C})$. Since $\#\pi_{\mathcal{W}}(\mathbb{F}_2^n)=2^\ell$, this occurs for all  $\mathbf{c}$ and $\mathbf{m}$ if and only if  $G_{\mathcal{W}}$ has full rank, $\mbox{\rm rank}(G_{\mathcal{W}})=\ell$. Note that $\codim\le \dry$.
\end{proof}

\begin{lemma}
$G_{\mathcal{W}}$ has full rank if and only if there is no nonzero word of $\mathcal{C}^{\perp}$ with support contained in $\mathcal{W}$.
\end{lemma}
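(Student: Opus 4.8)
The plan is to reduce the statement to a direct translation between linear dependence among the columns of $G_{\mathcal{W}}$ and the existence of a codeword of $\mathcal{C}^{\perp}$ supported inside $\mathcal{W}$. First I would recall that $G$ is an $(n-\codim)\times n$ matrix and that, by definition of the dual, $\mathcal{C}^{\perp}=\{\mathbf{w}\in\mathbb{F}_2^n : G\mathbf{w}^T=\mathbf{0}\}$; that is, a vector lies in $\mathcal{C}^{\perp}$ exactly when it annihilates every row of $G$. The matrix $G_{\mathcal{W}}$ is formed by the columns of $G$ indexed by $\mathcal{W}$, so it has $\ell=\#\mathcal{W}$ columns, and ``full rank'' means $\mbox{\rm rank}(G_{\mathcal{W}})=\ell$, i.e. its columns are linearly independent.

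Next I would argue the two directions. Suppose $G_{\mathcal{W}}$ does not have full rank; then there is a nonzero $\mathbf{v}\in\mathbb{F}_2^{\ell}$ with $G_{\mathcal{W}}\mathbf{v}^T=\mathbf{0}$. Define $\mathbf{w}\in\mathbb{F}_2^n$ by placing the entries of $\mathbf{v}$ in the positions of $\mathcal{W}$ and $0$ in the positions of $\mathcal{D}$. Then $\mathbf{w}\neq\mathbf{0}$, $\mbox{\rm supp}(\mathbf{w})\subseteq\mathcal{W}$, and since the columns of $G$ indexed by $\mathcal{D}$ are multiplied by zero we get $G\mathbf{w}^T=G_{\mathcal{W}}\mathbf{v}^T=\mathbf{0}$, so $\mathbf{w}\in\mathcal{C}^{\perp}$. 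Conversely, any nonzero $\mathbf{w}\in\mathcal{C}^{\perp}$ with $\mbox{\rm supp}(\mathbf{w})\subseteq\mathcal{W}$ restricts to a nonzero $\mathbf{v}=\pi_{\mathcal{W}}(\mathbf{w})$ with $G_{\mathcal{W}}\mathbf{v}^T=G\mathbf{w}^T=\mathbf{0}$, exhibiting a dependence among the columns of $G_{\mathcal{W}}$. Taking contrapositives gives the equivalence.

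There is essentially no obstacle here; the only point requiring a little care is the bookkeeping of the bijection $\mathbf{v}\leftrightarrow\mathbf{w}$ between $\mathbb{F}_2^{\ell}$ and the subspace of $\mathbb{F}_2^n$ of vectors supported on $\mathcal{W}$, together with the (immediate) observation that under this identification $G\mathbf{w}^T$ and $G_{\mathcal{W}}\mathbf{v}^T$ coincide. I would also note that, combined with the previous lemma, this result rephrases the solvability of $[S]$ for general $\mathbf{c},\mathbf{m}$ as the condition that $\mathcal{C}^{\perp}$ contains no nonzero word supported in the wet set $\mathcal{W}$, which is precisely the link to the dual distance $d^{\perp}$ exploited in the sequel.
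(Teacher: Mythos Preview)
Your argument is correct and follows the same route as the paper: both exploit that $G$ is a parity check matrix for $\mathcal{C}^{\perp}$, so a nonzero word of $\mathcal{C}^{\perp}$ supported in $\mathcal{W}$ corresponds exactly to a linear dependence among the columns of $G_{\mathcal{W}}$. The paper compresses this into a single sentence, whereas you spell out the bijection $\mathbf{v}\leftrightarrow\mathbf{w}$ explicitly, but the idea is identical.
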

\begin{proof}
Since $G$ is a parity check matrix of $\mathcal{C}^{\perp}$, a nonzero word in $\mathcal{C}^{\perp}$ with support contained in $\mathcal{W}$ imposes a linear condition on the columns of $G_{\mathcal{W}}$ and conversely.
\end{proof}

More generally, if there exist  $w$ independent words of $\mathcal{C}^{\perp}$ with support in $\mathcal{W}$ then   we have $\mbox{rank}(G_{\mathcal{W}})=\ell-w$. This suggests that the weight hierarchy of  $\mathcal{C}$ also plays a role in the study of the solvability of $[S]$.   This study will be conducted later in section \ref{hierarchy}.

\begin{theorem}\label{solutioniff}
The system $[S]$ has a solution for arbitrary
$\mathbf{c}\in\mathbb{F}_2^n,\mathbf{m}\in\mathbb{F}_2^\codim$ and
\mbox{$\mathcal{W}\subseteq \{ 1,\dots,n\}$} with $\#
\mathcal{W}=n-\dry$, if and only if $\dry\ge n-d^\perp+1$. In this
case $[S]$ has exactly $2^{\dry-\codim}$ solutions.
\end{theorem}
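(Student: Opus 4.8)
The plan is to read the statement off the two preceding lemmas together with the definition of $d^\perp$. By the first Lemma, for arbitrary $\mathbf{c}$ and $\mathbf{m}$ the system $[S]$ is solvable exactly when $\mathrm{rank}(G_\mathcal{W})=\ell$, where $\ell=\#\mathcal{W}=n-\dry$; and by the second Lemma this holds exactly when no nonzero word of $\mathcal{C}^\perp$ has support contained in $\mathcal{W}$. So I would first reformulate the hypothesis of the theorem as: every $\ell$-subset $\mathcal{W}$ of $\{1,\dots,n\}$ fails to contain the support of a nonzero dual codeword.

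For the ``if'' direction, assume $\dry\ge n-d^\perp+1$, i.e.\ $\ell\le d^\perp-1$. Since every nonzero word of $\mathcal{C}^\perp$ has weight at least $d^\perp>\ell$, no set $\mathcal{W}$ of size $\ell$ can contain such a support; hence $G_\mathcal{W}$ has full rank and $[S]$ is solvable for all $\mathbf{c},\mathbf{m}$. For the ``only if'' direction, assume instead $\dry\le n-d^\perp$, i.e.\ $\ell\ge d^\perp$. Choose a nonzero $\mathbf{v}\in\mathcal{C}^\perp$ of weight $d^\perp$ and extend its support to a set $\mathcal{W}$ with $\#\mathcal{W}=\ell$; then $\mathbf{v}$ still has support inside $\mathcal{W}$, so $\mathrm{rank}(G_\mathcal{W})<\ell$, and by the first Lemma there are $\mathbf{c},\mathbf{m}$ for which $[S]$ has no solution. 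This gives the claimed equivalence.

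For the number of solutions, I would work in the coset $\mathrm{cl}(\mathbf{m})+\mathcal{C}$, which is precisely the set of $\mathbf{x}$ with $\mathbf{x}H^T=\mathbf{m}$ and has cardinality $2^{n-\codim}$. The solutions of $[S]$ are the elements of this coset with $\pi_\mathcal{W}(\mathbf{x})=\pi_\mathcal{W}(\mathbf{c})$. The restriction of $\pi_\mathcal{W}$ to the coset is an affine surjection onto $\pi_\mathcal{W}(\mathbf{c})+\pi_\mathcal{W}(\mathcal{C})$, a space of dimension $\mathrm{rank}(G_\mathcal{W})=\ell=n-\dry$ in the solvable case, so each nonempty fibre has exactly $2^{n-\codim}/2^{n-\dry}=2^{\dry-\codim}$ elements; the fibre over $\pi_\mathcal{W}(\mathbf{c})$ is nonempty precisely by the equivalence just proved. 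Hence $[S]$ has exactly $2^{\dry-\codim}$ solutions.

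There is essentially no serious obstacle, as the real work is done by the two lemmas. The only points needing care are, in the ``only if'' direction, producing an \emph{explicit} bad triple $(\mathbf{c},\mathbf{m},\mathcal{W})$ — handled by padding the support of a minimum-weight dual codeword up to size $\ell$ — and, in the counting step, not forgetting to invoke the equivalence so that the relevant fibre is guaranteed to be nonempty.
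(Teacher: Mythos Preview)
Your proposal is correct and follows essentially the same approach as the paper: both directions of the equivalence are obtained by combining the two preceding lemmas with the definition of $d^\perp$ (in particular, padding the support of a minimum-weight dual codeword to size $\ell$ for the converse), and the solution count is obtained as $\#\mathcal{C}/\#\pi_{\mathcal{W}}(\mathcal{C})=2^{\dry-\codim}$. Your write-up is slightly more explicit in spelling out the fibre argument, but the underlying ideas coincide.
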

\begin{proof}
If $\dry\ge n-d^{\perp}+1$ then $\# \mathcal{W}<d^{\perp}$ and no
nonzero codeword of $\mathcal{C}^\perp$ has support contained in
$\mathcal{W}$. Conversely, take a codeword of weight $d^{\perp}$ and a
set $\mathcal{W}$ of cardinality $n-\dry$ containing its support. Then
$\mbox{rank}(G_{\mathcal{W}})<n-\dry$ and the homogeneous system
$H\mathbf{x}^t=\mathbf{0}$ has no solution for  $\mathbf{c}$ such that
$\pi_{\mathcal{W}}(\mathbf{c})$ is not in the subspace
$\pi_{\mathcal{W}}(\mathcal{C})$ spanned by the rows of
$G_{\mathcal{W}}$.
When $\mbox{rank}(G_{\mathcal{W}})=n-\dry$, then the number of
solutions is $\# \mathcal{C} /\#
\pi_{\mathcal{W}}(\mathcal{C})=2^{\dry-\codim}$.
\end{proof}

Then when using a parity check matrix of a $[n,n-\codim]$ code
$\mathcal{C}$,  at most $n-d^{\perp}+1$ dry symbols are needed to
embed $\codim$ information symbols. The wet threshold of $\mathcal{C}$ is 
$\tau=n-d^{\perp}+1$ and its strict overhead is $n-d^{\perp}+1- \codim$.
Remark that according to the Singleton
bound applied to $\mathcal{C}^{\perp}$ we have $n-d^{\perp}+1\ge
\codim$. The difference $n-d^{\perp}+1- \codim$ is known as the {\em Singleton
defect} of $\mathcal{C}^{\perp}$. Thus, when using a parity check
matrix of  $\mathcal{C}$ to embed information via wet paper codes, the
strict overhead is just the
Singleton defect of the dual code $\mathcal{C}^{\perp}$.

\begin{example} \label{ag}
{\rm {\bf (1)} Consider the binary Hamming code of redundancy $s$ and length
  \mbox{$n=2^s-1$}. The dual distance is $d^{\perp}=2^{s-1}$, hence we can
  embed $s$ information bits into a cover vector of length $n$ with
  $2^{s-1}\approx n/2$ dry positions. To
  see that less dry symbols are
  not enough to have solution with certainty, consider a parity-check
  matrix whose rows are the binary representations of integers
  $1,\dots,2^{s}-1$. When deleting the last $2^{s-1}$ columns of $H$
  we obtain a matrix whose last row is 0. Note also that the method proposed in
  \cite{BAF} allows to embed one information bit for $n/2$ dry positions
  modifying one bit of the cover vector.
  {\bf (2)} In general it is not simple to construct codes with bounded Singleton defect. An exception are algebraic geometry codes, built from an algebraic curve and two rational divisors, see \cite{abundance}. It is known that the Singleton defect of a code comming from a curve $\mathcal{X}$ is bounded by the genus of $\mathcal{X}$. Therefore it is possible to construct wet paper codes with strict overhead as small as desired.}
\end{example}

\section{Computing the overhead}\label{overhead}

Our second task is to compute the average overhead
$\tilde{m}=\dry-\codim$ to have a solution for random
$\mathcal{C}, \mathcal{W}, \mathbf{c}$ and $\mathbf{m}$ (according
with previous notations). Also we obtain an estimate on
the probability of having  solution.  Let us denote by
$\mbox{avrank}(t,s)$ the average rank of a random $t\times s$ matrix
$M$. 

\begin{proposition}
For random $\mathcal{C}, \mathcal{W}, \mathbf{c}$ and $\mathbf{m}$ as above, the probability that $\delta$ dry symbols are enough to transmit $\codim\le \delta$ message symbols is
\[
p=2^{\mbox{\rm\small avrank}(n-\codim,n-\dry)-(n-\dry)}.
\]
\end{proposition}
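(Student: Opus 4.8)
The plan is to compute $p$ in two stages. First I fix the code $\mathcal{C}$ (equivalently, its generator matrix $G$) and the wet set $\mathcal{W}$, with $\#\mathcal{W}=\ell=n-\dry$, and evaluate the fraction of pairs $(\mathbf{c},\mathbf{m})$ for which $[S]$ is solvable; then I average that fraction over the random choice of $\mathcal{C}$ and $\mathcal{W}$.

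For the first stage, fix $\mathcal{C}$, $\mathcal{W}$ and $\mathbf{m}\in\mathbb{F}_2^\codim$. As recalled in Section~\ref{solutions}, $[S]$ is solvable exactly when $\pi_{\mathcal{W}}(\mathbf{c})\in\pi_{\mathcal{W}}(\mbox{cl}(\mathbf{m})+\mathcal{C})$, and by the argument of the first lemma of that section the target set is a coset in $\mathbb{F}_2^{\ell}$ of the subspace $\pi_{\mathcal{W}}(\mathcal{C})$, of dimension $\mbox{rank}(G_{\mathcal{W}})$. Since $\mathbf{c}\mapsto\pi_{\mathcal{W}}(\mathbf{c})$ sends the uniform distribution on $\mathbb{F}_2^{n}$ to the uniform distribution on $\mathbb{F}_2^{\ell}$, a uniformly random $\mathbf{c}$ lands in that coset with probability $2^{\mbox{\rm\small rank}(G_{\mathcal{W}})}/2^{\ell}$. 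This value is independent of $\mathbf{m}$, so the same holds for a uniformly random pair $(\mathbf{c},\mathbf{m})$: conditionally on $\mathcal{C}$ and $\mathcal{W}$, the probability that $[S]$ has a solution is $2^{\mbox{\rm\small rank}(G_{\mathcal{W}})-\ell}$.

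For the second stage I average this conditional probability over the random choice of $\mathcal{C}$ and $\mathcal{W}$. The point is that, for a random $[n,n-\codim]$ code $\mathcal{C}$ and a random subset $\mathcal{W}$ of the prescribed size, the submatrix $G_{\mathcal{W}}$ is distributed as a uniformly random $(n-\codim)\times(n-\dry)$ matrix $M$ over $\mathbb{F}_2$. Hence $p=2^{-(n-\dry)}\,\mathbb{E}\big[\,2^{\mbox{\rm\small rank}(M)}\,\big]$, and identifying the expectation with $2^{\mbox{\rm\small avrank}(n-\codim,\,n-\dry)}$ yields the stated formula $p=2^{\mbox{\rm\small avrank}(n-\codim,\,n-\dry)-(n-\dry)}$. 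As a byproduct, $p=1$ forces $\mbox{avrank}(n-\codim,n-\dry)=n-\dry$, i.e.\ $G_{\mathcal{W}}$ has full column rank on average, which is the computation that controls the average overhead $\dry-\codim$ treated afterwards.

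I expect the delicate step to be the passage to the expectation in the second stage. One has to (i) pin down the probabilistic model so that $G_{\mathcal{W}}$ really becomes a uniform $(n-\codim)\times(n-\dry)$ matrix (the random code and the random wet set must together leave each such matrix equally likely), and (ii) relate $\mathbb{E}\big[2^{\mbox{\rm\small rank}(M)}\big]$ to the average rank $\mbox{avrank}(n-\codim,n-\dry)$ of $M$; these are not literally the same quantity for an arbitrary random matrix, so this is where genuine care — or an approximation legitimate in the asymptotic regime $\dry\to\infty$ of interest — is needed. Everything preceding it is a direct consequence of the solvability criterion and the rank identities already established in Section~\ref{solutions}, together with elementary counting of $\mathbb{F}_2$-vectors.
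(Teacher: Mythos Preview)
Your approach is essentially the same as the paper's, only more carefully spelled out: the paper's proof is a single displayed line that rewrites $p$ as $\mbox{prob}(\pi_{\mathcal{W}}(\mathbf{c})\in\pi_{\mathcal{W}}(\mbox{cl}(\mathbf{m})+\mathcal{C}))=\#\pi_{\mathcal{W}}(\mathcal{C})/2^{n-\dry}=2^{\mbox{\scriptsize avrank}(G_{\mathcal{W}})}/2^{n-\dry}$, which is exactly your two stages compressed.

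Your worry about the second stage is well placed and is not something the paper resolves either: it passes directly from $2^{\mbox{\scriptsize rank}(G_{\mathcal{W}})}$ to $2^{\mbox{\scriptsize avrank}(G_{\mathcal{W}})}$ without commenting on the distinction between $\mathbb{E}[2^{\mathrm{rank}}]$ and $2^{\mathbb{E}[\mathrm{rank}]}$. In the intended asymptotic regime the rank of a random $(n-\codim)\times(n-\dry)$ matrix is sharply concentrated (Theorem~\ref{rangos}), so the two quantities agree up to a negligible factor, and the formula is used only to feed the subsequent numerical estimates; but as an exact identity it should be read heuristically. You have not missed anything present in the paper's argument.
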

\begin{proof}
The probability that the corresponding system $[S]$ have a solution is
\[
p=\mbox{prob}\left( \pi_{\mathcal{W}}(\mathbf{c})\in \pi_{\mathcal{W}}(\mbox{cl}(\mathbf{m})+\mathcal{C})\right)
=\frac{\# \pi_{\mathcal{W}}(\mathcal{C})}{2^{n-\dry}} = \frac {2^{\mbox{\rm\small avrank}( G_{\mathcal{W}}) }}{2^{n-\dry}}.
\]
\end{proof}

The function $\mbox{avrank}(t,s)$ can be computed using theorem \ref{rangos}  below.
The rank properties of random matrices have been investigated in coding theory, among other fields, related to codes for the erasure channel, see e.g. \cite{rank3}. As shown in \cite{fri1,fri2}, these results allow us to give an estimate on the average overhead. Since $G_{\mathcal{W}}$ is a $(n-\codim)\times (n-\dry)$ matrix and $(n-\codim)-(n-\dry)=\dry-\codim$, then $\tilde{m}$ can be seen as the minimum number of extra rows beyond $n-\dry$ required to obtain a matrix of full rank.
Let $t,m$ be non negative integers and $M_{t+m,t}$ be a random $(t+m)\times t$ matrix with $m\ge 0$.

\begin{theorem}\label{rangos}
Let $M_{t+m,t}$ be a matrix where the elements of $\mathbb{F}_2$ are equally likely. Then 
\[
\lim_{t\rightarrow\infty}{\rm prob}\left( {\rm rank}(M_{t+m,t})=t-s\right) = \prod_{j=s+m+1}^{\infty}\left( 1-2^{-j}\right)/2^{s(s+m)}\prod_{j=1}^{s}\left( 1-2^{-j}\right).  
\]
\end{theorem}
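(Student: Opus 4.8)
The plan is to reduce the statement to an exact enumeration of matrices of prescribed rank and then pass to the limit. Fix the target rank $k=t-s$ and recall the standard fact that a rank-$k$ matrix $M$ of size $(t+m)\times t$ over $\mathbb{F}_2$ factors as $M=CR$, where $C$ is $(t+m)\times k$ of full column rank and $R$ is $k\times t$ of full row rank, two factorisations $(C,R)$ and $(CG^{-1},GR)$ with $G\in GL_k(\mathbb{F}_2)$ giving the same $M$. Counting the admissible $C$'s, the admissible $R$'s, and the elements of $GL_k(\mathbb{F}_2)$ by picking their columns, resp. rows, one at a time so that independence is preserved, the number of such $M$ equals
\[
N_k=\frac{\prod_{i=0}^{k-1}(2^{t+m}-2^{i})\;\prod_{i=0}^{k-1}(2^{t}-2^{i})}{\prod_{i=0}^{k-1}(2^{k}-2^{i})}.
\]
Since a uniform random $(t+m)\times t$ matrix takes each of the $2^{t(t+m)}$ possible values with equal probability, we have $\mathrm{prob}\big(\mathrm{rank}(M_{t+m,t})=t-s\big)=N_{t-s}/2^{t(t+m)}$, and the whole problem becomes the asymptotic analysis of this quotient.

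Next I would separate the powers of $2$. Writing $2^{a}-2^{i}=2^{i}(2^{a-i}-1)$ and then $2^{j}-1=2^{j}(1-2^{-j})$, and reindexing, each product $\prod_{i=0}^{k-1}(2^{a}-2^{i})$ becomes $2^{E_a}\prod_{j=a-k+1}^{a}(1-2^{-j})$ for an explicit exponent $E_a$ quadratic in $t,s,m$. Substituting the three instances $a=t+m$, $a=t$ and $a=k=t-s$, and including the normalising factor $2^{-t(t+m)}$, the exponents of $2$ telescope and all $t$-dependence cancels, leaving the $t$-independent value $-s(s+m)$. One then arrives at the closed form, valid for every $t$,
\[
\mathrm{prob}\big(\mathrm{rank}(M_{t+m,t})=t-s\big)=2^{-s(s+m)}\;\frac{\prod_{j=s+m+1}^{t+m}(1-2^{-j})\;\prod_{j=s+1}^{t}(1-2^{-j})}{\prod_{j=1}^{t-s}(1-2^{-j})}.
\]

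To finish, let $t\to\infty$. Because $\sum_{j\ge1}2^{-j}<\infty$, each of the three finite products converges to the corresponding (nonzero) infinite product, so the quotient above tends to $2^{-s(s+m)}\prod_{j\ge s+m+1}(1-2^{-j})\prod_{j\ge s+1}(1-2^{-j})/\prod_{j\ge1}(1-2^{-j})$. Cancelling the common tail, $\prod_{j\ge s+1}(1-2^{-j})/\prod_{j\ge1}(1-2^{-j})=1/\prod_{j=1}^{s}(1-2^{-j})$, which gives precisely $\prod_{j=s+m+1}^{\infty}(1-2^{-j})/\big(2^{s(s+m)}\prod_{j=1}^{s}(1-2^{-j})\big)$, as claimed.

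The one place that demands care is the middle step: one must verify that the exponent of $2$ is genuinely independent of $t$, which amounts to an identity among triangular-number sums $\sum_{j=a+1}^{b}j$ and is exactly where a sign slip would go unnoticed. Everything else is either the classical count of rank-$k$ matrices over $\mathbb{F}_2$ (the same computation that underlies rank statistics for the erasure channel) or the elementary convergence of the infinite products $\prod(1-2^{-j})$; no further obstacle is expected.
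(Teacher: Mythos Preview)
Your argument is correct. The factorisation count $N_k$ is the standard enumeration of rank-$k$ matrices over $\mathbb{F}_2$, the exponent bookkeeping indeed collapses to $-s(s+m)$ (your check via $(t+m)(t-s)+t(t-s)-(t-s)^2-t(t+m)=-s^2-sm$ is the right one), and the passage to the limit is justified by absolute convergence of $\sum 2^{-j}$, which makes each infinite product nonzero and continuous in its truncation.

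As for comparison with the paper: there is nothing to compare. The paper does not prove this theorem at all; it simply states the result and refers the reader to \cite{rank1,rank2,rank3} for a proof. Your write-up therefore supplies what the paper only cites. The route you take---exact enumeration followed by an elementary limit---is in fact the classical one found in those references (and is essentially the computation behind the erasure-channel rank statistics of \cite{rank3}), so you are not doing anything exotic, but you \emph{are} doing more than the paper does.
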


See \cite{rank1,rank2,rank3}.
It is known that this formula is very accurate even for small $t$ and $m$.
This theorem directly allows us to obtain numerical
estimates on the function $\mbox{avrank}$ and consequently on the
probability that $[S]$ admits a solution for random $\mathcal{C},
\mathcal{W}, \mathbf{c}$ and $\mathbf{m}$. These estimates can be
found in the literature (see \cite{schonfeld} and the references
therein) and we will not repeat them here.
Also theorem \ref{rangos} can be used to  compute
the average number of extra rows needed to have full rank. Following
\cite{rank3}, for any positive $m$, let 
\[
Q_m=\prod_{j=m+1}^{\infty}\left(1- 2^{-j}\right).
\] 
According to the  theorem, the probability that exactly $m$ extra rows beyond $t$ are needed to
obtain a $(t+m)\times t$ random matrix of full rank is
$Q_m-Q_{m-1}$. Since \mbox{$Q_{m-1}=((2^m-1)/2^m)Q_m$} we have
$Q_m-Q_{m-1}=Q_m/2^m$, so the average number of extra rows is 
\[
\tilde{m}=\sum_{m=1}^{\infty} m(Q_m-Q_{m-1})=\sum_{m=1}^{\infty}\frac{m}{2^m}Q_m.
\]
This series is convergent as it is upper-bounded by a convergent
arithmetic-geometric series.
Let us remember that from elementary calculus we have $\sum_{m=1}^{\infty} m x^m =x/(1-x)^2$ when $|x|<1$. Then
\[
\tilde{m}=\sum_{m=1}^{\infty}\frac{m}{2^m}Q_m < \sum_{m=1}^{\infty}\frac{m}{2^m}=2.
\]
A direct computation shows that  $\tilde{m}= 1.6067..$. Then  the average overhead is $1.6067$ and, for $n$ large enough, $\dry$ dry bits are enough to transmit $\codim\approx \dry-1.6$ information bits.

\section{Solvability and the generalized Hamming weights} \label{hierarchy}

Let $\mathcal{C}$ be a linear $[n,n-\codim]$ code and let
$\mathcal{C}^{\perp}$ be its dual. The dual distance $d^\perp$ can be
expressed in terms of $\mathcal{C}$  via its weight hierarchy. Let us
remember that for $1\le \MDSr \le n-\codim$, the $\MDSr$-th generalized
Hamming weight of $\mathcal{C}$ is defined as (see \cite{wei})
$$
d_\MDSr(\mathcal{C})= \min \{ \# \mbox{supp}(L) :  \mbox{$L$ is a
  $\MDSr$-dimensional linear subspace of $\mathcal{C}$}\}
$$
where $\mbox{supp}(L)=\cup _{\mathbf{x}\in L}
\mbox{supp}(\mathbf{x})$. The sequence $d_1(\mathcal{C}),
\dots,d_{n-\codim}(\mathcal{C})$ is the {\em weight hierarchy} of
$\mathcal{C}$.  Two important properties of the weight hierarchy  are
the monotonicity
\mbox{$d_1(\mathcal{C})<d_2(\mathcal{C})<\dots<d_{n-\codim}(\mathcal{C})$}
and the duality
\mbox{$
\{ d_1(\mathcal{C}),\dots,d_{n-\codim}(\mathcal{C})\} \cup \{n+1-
d_1(\mathcal{C}^\perp), \dots,
n+1-d_\codim(\mathcal{C}^\perp)=\{1,\dots,n\}
$}. For simplicity we shall write $d_1,\dots,d_{n-\codim}$ and
$d_1^\perp,\dots,d_\codim^\perp$.
If $d_{n-\codim}=n$, we define the {\em MDS rank} of $\mathcal{C}$ as the least integer $\MDSr$ such that
$d_\MDSr =\codim+\MDSr$ (and consequently $d_s=\codim+s$ for all $s\ge
\MDSr$). Note that classical MDS codes are first rank MDS codes.

\begin{proposition}
If $\mathcal{C}$ has MDS rank $\MDSr$ and $\dry\ge \MDSr +\codim-1$,
then the corresponding system $[S]$ has a solution for arbitrary
$\mathbf{c}\in\mathbb{F}_2^n,\mathbf{m}\in\mathbb{F}_2^\codim$ and
$\mathcal{W}\subseteq \{ 1,\dots,n\}$ with $\# \mathcal{W}=n-\dry$.
\end{proposition}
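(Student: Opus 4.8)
The strategy is to use the duality between the weight hierarchy of $\mathcal{C}$ and that of $\mathcal{C}^\perp$ to control the dual distance $d^\perp = d_1^\perp$, and then apply Theorem~\ref{solutioniff}. By Theorem~\ref{solutioniff}, it suffices to show that $\dry \ge n - d^\perp + 1$, i.e. that $d^\perp \ge n - \dry + 1 \ge n - (\MDSr + \codim - 1) + 1 = n - \MDSr - \codim + 2$. So the entire task reduces to proving the inequality $d_1^\perp \ge n - \MDSr - \codim + 2$ whenever $\mathcal{C}$ has MDS rank $\MDSr$.

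To get this I would invoke the duality relation stated in Section~\ref{hierarchy}: the sets $\{d_1,\dots,d_{n-\codim}\}$ and $\{n+1-d_1^\perp,\dots,n+1-d_\codim^\perp\}$ partition $\{1,\dots,n\}$. The MDS rank hypothesis says $d_s = \codim + s$ for all $s \ge \MDSr$; in particular the top $n-\codim-\MDSr+1$ values of the weight hierarchy of $\mathcal{C}$ are exactly $\codim+\MDSr, \codim+\MDSr+1, \dots, n$. Hence the complementary set $\{n+1-d_j^\perp\}$ is contained in $\{1,\dots,\codim+\MDSr-1\}$, which forces every $n+1-d_j^\perp \le \codim+\MDSr-1$, i.e. $d_j^\perp \ge n - \codim - \MDSr + 2$ for all $j$, and in particular for $j=1$. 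That is exactly what is needed.

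The main obstacle, such as it is, is making the counting argument with the duality relation precise: one must check that the $n-\codim-\MDSr+1$ consecutive integers $\codim+\MDSr,\dots,n$ genuinely occupy that many of the $n$ slots, so that the $\codim$ values coming from $\mathcal{C}^\perp$ are squeezed into $\{1,\dots,\codim+\MDSr-1\}$ — this uses that the two sets in the duality are disjoint and exhaust $\{1,\dots,n\}$, together with the fact that $d_\codim^\perp$ is the largest of the dual values, so it suffices to bound $d_1^\perp$, the smallest. Once the inequality $d^\perp \ge n-\MDSr-\codim+2$ is in hand, the conclusion is immediate from Theorem~\ref{solutioniff}, and one may additionally remark that in this regime $[S]$ has exactly $2^{\dry-\codim}$ solutions.
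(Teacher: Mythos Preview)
Your proposal is correct and follows essentially the same route as the paper: both use the duality property of the weight hierarchy to relate the MDS rank $t$ to $d^\perp$, and then invoke Theorem~\ref{solutioniff}. The paper's proof is a one-liner asserting the exact identity $t = n - r - d^\perp + 2$ (equivalently $n - d^\perp + 1 = t + r - 1$) directly from duality, whereas you spell out the counting to obtain the inequality $d^\perp \ge n - t - r + 2$; either suffices, and your more detailed argument in fact yields the equality too once you note that $r + t - 1 \notin \{d_1,\dots,d_{n-r}\}$ by minimality of $t$.
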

\begin{proof}
By the duality property $\mathcal{C}$ has MDS rank
$\MDSr=n-\codim-d^\perp+2$ hence $n-d^\perp+1=\MDSr+\codim-1$ and
proposition~\ref{solutioniff} implies the result.
\end{proof}

This proposition leads us to consider codes with low MDS rank. MDS
codes, and Reed-Solomon codes in particular, were proposed as good
candidates in \cite{fontaine+galand}. The main drawback of MDS codes
is its small length. So, we may consider codes of higher rank,
reaching a balance between length and security in the existence of
solutions. In this sense, algebraic geometry codes (defined in example \ref{ag}) can be a good
option. It is known that an AG code coming from a curve of genus $g$
has MDS rank at most $g+1-a$, where $a$ is its abundance, more details
in \cite{abundance}. Yet we find again the problem of the length of
obtained codes. For example, it has been conjectured that Near MDS
codes (codes  for which $d+d^\perp=n$) over $\mathbb{F}_q$ have length
upper bounded by  $q+1+2\sqrt{q}$ (observe that codes arising from
elliptic curves are either MDS or NMDS). Another option is to extend
the ground alphabet. Several strategies have been proposed. One of the
more interesting is to consider the Justensen construction with
algebraic geometric codes \cite{Shen}.  The following result
extends proposition~\ref{solutioniff} to all generalized Hamming
weights.

\begin{proposition}
If $d_\MDSr>\dry\ge \codim$ for some $\MDSr\ge \dry-\codim$, then
$\mbox{rank}(G_{\mathcal{W}})\ge n-\codim-\MDSr+1$ for every set
$\mathcal{W}\subseteq \{ 1,\dots,n\}$ with $\# \mathcal{W}=n-\dry$.
\end{proposition}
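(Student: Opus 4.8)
The plan is to reduce the inequality to a dimension count on a subcode of $\mathcal{C}$ and then to read off the conclusion from the very definition of the generalized Hamming weights. First I would set $\mathcal{D}=\{1,\dots,n\}\setminus\mathcal{W}$, so that $\#\mathcal{D}=\dry$, and observe that the rows of $G_{\mathcal{W}}$ are the restrictions to $\mathcal{W}$ of the rows of $G$, hence span $\pi_{\mathcal{W}}(\mathcal{C})$. Consequently
\[
\mbox{rank}(G_{\mathcal{W}})=\dim\pi_{\mathcal{W}}(\mathcal{C})=(n-\codim)-\dim\mathcal{C}_{\mathcal{D}},
\]
where $\mathcal{C}_{\mathcal{D}}=\{\mathbf{x}\in\mathcal{C}:\mbox{supp}(\mathbf{x})\subseteq\mathcal{D}\}$ is the kernel of $\pi_{\mathcal{W}}$ restricted to $\mathcal{C}$. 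This is the same projection/kernel computation underlying the first lemma of Section~\ref{solutions}, now used without the assumption that $G_{\mathcal{W}}$ has full rank. The proposition is therefore equivalent to the bound $\dim\mathcal{C}_{\mathcal{D}}\le\MDSr-1$.

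To prove this bound I would argue by contradiction. If $\dim\mathcal{C}_{\mathcal{D}}\ge\MDSr$, choose a $\MDSr$-dimensional subspace $L\subseteq\mathcal{C}_{\mathcal{D}}$; this is legitimate because $\MDSr\ge1$ (as $d_{\MDSr}$ is defined), and the degenerate case $\mathcal{C}_{\mathcal{D}}=\{\mathbf{0}\}$, i.e. $\dim\mathcal{C}_{\mathcal{D}}=0\le\MDSr-1$, is then covered automatically. Every word of $L$ is supported inside $\mathcal{D}$, so $\mbox{supp}(L)\subseteq\mathcal{D}$ and $\#\mbox{supp}(L)\le\#\mathcal{D}=\dry$. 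But then the definition of the $\MDSr$-th generalized Hamming weight gives $d_{\MDSr}(\mathcal{C})\le\#\mbox{supp}(L)\le\dry$, contradicting $d_{\MDSr}>\dry$. Hence $\dim\mathcal{C}_{\mathcal{D}}\le\MDSr-1$ and $\mbox{rank}(G_{\mathcal{W}})\ge(n-\codim)-(\MDSr-1)=n-\codim-\MDSr+1$, as claimed.

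There is no genuine obstacle: once the identity for $\mbox{rank}(G_{\mathcal{W}})$ is in place the argument is a one-line dimension count. The only point that deserves a comment is the hypothesis $\MDSr\ge\dry-\codim$: by the generalized Singleton bound $d_{\MDSr}(\mathcal{C})\le n-(n-\codim)+\MDSr=\codim+\MDSr$, so $d_{\MDSr}>\dry$ already forces $\MDSr>\dry-\codim$; the stated inequality merely records the regime in which the asserted bound $n-\codim-\MDSr+1$ is compatible with the trivial upper estimate $\mbox{rank}(G_{\mathcal{W}})\le\ell=n-\dry$. One could alternatively run the whole argument on the dual side, writing $\mbox{rank}(G_{\mathcal{W}})=\ell-w$ with $w$ the number of independent words of $\mathcal{C}^{\perp}$ supported in $\mathcal{W}$ and translating via Wei's duality, but working directly with $\mathcal{C}$ is cleaner and matches the hypothesis, which is stated in terms of $d_{\MDSr}(\mathcal{C})$.
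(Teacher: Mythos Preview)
Your argument is correct. The identification $\mbox{rank}(G_{\mathcal{W}})=(n-\codim)-\dim\mathcal{C}_{\mathcal{D}}$ is just rank--nullity for $\pi_{\mathcal{W}}|_{\mathcal{C}}$, and once it is in place the bound $\dim\mathcal{C}_{\mathcal{D}}\le\MDSr-1$ drops out of the very definition of $d_{\MDSr}(\mathcal{C})$, exactly as you write.

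The paper, however, takes the dual route you mention at the end and dismiss. It writes $\mbox{rank}(G_{\mathcal{W}})=(n-\dry)-\dim\mathcal{C}^{\perp}_{\mathcal{W}}$, where $\mathcal{C}^{\perp}_{\mathcal{W}}$ is $\mathcal{C}^{\perp}$ shortened at $\mathcal{D}$; it then observes that $d^{\perp}_{\MDSr'}\ge n-\dry+1$ forces $\dim\mathcal{C}^{\perp}_{\mathcal{W}}\le\MDSr'-1$; and finally it invokes Wei's duality and monotonicity to convert the hypothesis $d_{\MDSr}>\dry$ into $d^{\perp}_{\codim-\dry+\MDSr}\ge n-\dry+1$, which after substituting $\MDSr'=\codim-\dry+\MDSr$ yields the stated bound. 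Your approach is genuinely shorter: by matching the side of the duality to the side on which the hypothesis is stated, you avoid the Wei-duality translation step entirely and reduce the proof to a single contradiction against the definition of $d_{\MDSr}$. The paper's detour has the minor virtue of making explicit the quantity $w=\dim\mathcal{C}^{\perp}_{\mathcal{W}}$ alluded to after Lemma~2.2, but for proving the proposition itself your primal argument is the cleaner one. Your remark that $d_{\MDSr}>\dry$ already forces $\MDSr>\dry-\codim$ via the generalized Singleton bound is also correct and worth noting.
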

\begin{proof}
Consider the code $\mathcal{C}^\perp_\mathcal{W}$  obtained from $\mathcal{C}^\perp$ by shortening at the positions in $\mathcal{D}$. Since $G_\mathcal{W}$ is a parity-check matrix for $\mathcal{C}^\perp_\mathcal{W}$, we have
$
\mbox{rank}(G_\mathcal{W})=n-\dry-\dim (\mathcal{C}^\perp_\mathcal{W})
$.
If $d_\MDSr^\perp\ge n-\dry+1$ then it holds that
$\dim(\mathcal{C}^\perp_\mathcal{W})\le \MDSr-1$, hence
$\mbox{rank}(G_\mathcal{W})\ge n-\dry-\MDSr+1$. Assume
$d_\MDSr>\dry$. Then $n-d_\MDSr+1<n-\dry+1$, and by the duality and
monotonicity properties, the interval $[n-\dry+1,n]$ contains at least
$\dry-\MDSr+1$ terms of the weight hierarchy of
$\mathcal{C}^\perp$. Thus $d^\perp_{\codim-\dry+\MDSr}\ge n-\dry+1$
and we get the statement.
\end{proof}

\section{A generalization to systematic codes}\label{systematic}

In this section we extend the matrix embedding construction, and the
wet paper method in particular, to the wide family of systematic
codes. We show that stegoschemes based on these codes are handled
essentially in the same manner as in the case of linear codes. The use
of systematic codes was suggested by Brierbauer and Fridrich in
\cite{covering}, where stegoschemes arising from the
Nordstrom-Robinson codes are treated in some detail. Here we go deeper
into this study, showing the relationships between stegoschemes,
orthogonal arrays and resilient functions.
We pay special attention to the analogue of proposition
\ref{solutioniff}, showing its combinatorial nature.

\subsection{Systematic codes}\label{SystematicCodes}

Let us remember that for a set $\mathcal{U}\subseteq \{ 1,\dots,n\}$
with $u=\#\mathcal{U}$, we denote by
$\pi_{\mathcal{U}}:{\mathbb{F}_2}^n\rightarrow{\mathbb{F}_2}^u$ the
projection on the coordinates of $\mathcal{U}$. If $\mathcal{V}=
\{1,\dots,n\} \setminus\mathcal{U}$, we shall write a vector
$\mathbf{x}\in\mathbb{F}_2^n$ as ${\mathbf x}=({\mathbf u},{\mathbf
  v})$, where ${\mathbf u}=\pi_{\mathcal{U}}({\mathbf
  x})\in\mathbb{F}_2^u$ and ${\mathbf v}=\pi_{\mathcal{V}}({\mathbf
  x})\in \mathbb{F}_2^v$, $v=n-u$.
A code $\mathcal{C}$ of length $n$ is {\em systematic} if there exist
$u$ positions that carry the information. More formally, given a set
$\mathcal{U}\subseteq \{ 1,\dots,n\}$, we say that $\mathcal{C}$ is
{\em systematic at the positions of $\mathcal{U}$} (or simply {\em
  systematic} when the set $\mathcal{U}$ is understood) if for every
$\mathbf{u}\in\mathbb{F}_2^u$ there exists one and only one codeword
$\mathbf{x}\in\mathcal{C}$ such that
$\pi_{\mathcal{U}}(\mathbf{x})=\mathbf{u}$. Up to reordering of
coordinates we can always assume that $\mathcal{U}=\{1,\dots,u\}$ and
$\mathcal{V} = \lbrace u+1,\dots,n \rbrace$.

If $\mathcal{C}$ is systematic then $\# \mathcal{C}=\#
\mathbb{F}_2^{u}=2^u$. We say that $\mathcal{C}$ is a $[n,u]$
code. Clearly every $[n,u]$ linear code is systematic of dimension $u$
hence this notation is consistent. Thus systematic codes generalize
linear codes. However the family of systematic codes is much greater
than the family of linear codes (apart from the advantage of being
defined over alphabets other than fields). To see that note that it is
fairly simple to construct a  systematic code $\mathcal{C}$: just
complete each vector in $\mathbb{F}_2^u$ to a vector in
$\mathbb{F}_2^n$. This completion induces a {\em generator function}
$\sigma=\sigma_{\mathcal{C}}:\mathbb{F}_2^u\rightarrow\mathbb{F}_2^{n-u}$
defined to  $\mathcal{C}=\{ (\mathbf{u},\sigma(\mathbf{u})) :
\mathbf{u}\in\mathbb{F}_2^u \}$. Then $\mathcal{C}$ is linear if and
only if so is $\sigma$. In this case there exists a $u\times(n-u)$
matrix $\Sigma$ such that $\sigma(\mathbf{u})=\mathbf{u}\Sigma$. Then
$(I_u,\Sigma)$ is a generator matrix for $\mathcal{C}$ and
consequently $H=(-\Sigma^T,I_{n-u})$ is a parity-check matrix of
$\mathcal{C}$. Since every map $\mathbb{F}_2^u\rightarrow
\mathbb{F}_2$ can be written as a reduced polynomial, the components
$\sigma_1,\dots,\sigma_{n-u}$ of $\sigma$ are square free reduced polynomials in
variables $x_1,\dots,x_u$.

The family of systematic codes contains some nonlinear codes having excellent parameters. Among these we can highlight the Preparata, Kerdrock, Nodstrom-Robinson and many others. Some of them  have also efficient decoding systems (which is the main drawback of nonlinear codes). Other well known example is the following.

\begin{example}\label{Nadler}
The Nadler code $\mathcal{N}$ is a $[12,5]$ systematic nonlinear code with covering radius $\rho=4$ and minimum distance $d=5$, \cite{Nad2}.  $\mathcal{N}$ contains twice as many codewords as any linear code with the same length and minimum distance, see \cite{vLint}. Among the current practical applications of $\mathcal{N}$ we can mention its use for the decoder module of  SINCGARS radio systems \cite{Sinc1}.
The combinatorial structure of $\mathcal{N}$ was shown by van Lint in \cite{NadLint}; following this article, the 32 codewords of $\mathcal{N}$ are shown in Table 1. 

\begin{table}[htbp]\label{TNadler}
\begin{center}
\begin{tabular}{ccc}
\begin{tabular}{|cccc|}\hline
011&100&100&100 \\
101&010&010&010\\
110&001&001&001\\ \hline
100&011&100&100\\
010&101&010&010\\
001&110&001&001\\ \hline
100&100&011&100\\
010&010&101&010\\
001&001&110&001\\ \hline
100&100&100&011\\
010&010&010&101\\
001&001&001&110\\ \hline
\end{tabular} &
\begin{tabular}{|cccc|}\hline
111&010&100&001\\
111&001&010&100\\
111&100&001&010\\\hline
010&111&001&100\\
001&111&100&010\\
100&111&010&001\\ \hline
100&001&111&010\\
010&100&111&001\\
001&010&111&100\\ \hline
001&100&010&111\\
100&010&001&111\\
010&001&100&111\\ \hline
\end{tabular} &
\begin{tabular}{|cccc|}\hline
011&011&011&011\\
101&101&101&101\\
110&110&110&110\\ \hline
000&111&111&111\\
111&000&111&111\\
111&111&000&111\\
111&111&111&000\\ \hline
000&000&000&000\\ \hline
\end{tabular}
\end{tabular}
\newline\vspace*{0.5cm}
Table 1. The 32 codewords of the Nadler code
\end{center}
\end{table}

It is not hard to check that this code is systematic at positions 1,2,4,7,  10. Besides the exhaustive enumeration given in Table 1, $\mathcal{N}$ can be described by the function $\sigma$. Up to reordering of coordinates so that $\mathcal{N}$ is systematic at positions $1,\dots,5$, we have

\begin{eqnarray*}
\sigma_6    &=& x_1+x_2+x_3+(x_1+x_5)(x_3+x_4) \\
\sigma_7    &=& x_1+x_2+x_4+(x_1+x_3)(x_4+x_5) \\
\sigma_8    &=& x_1+x_2+x_5+(x_1+x_4)(x_3+x_5) \\
\sigma_9    &=& x_2+x_3+x_4+x_1x_4+x_4x_5+x_5x_1 \\
\sigma_{10} &=& x_2+x_3+x_5+x_1x_3+x_3x_4+x_4x_1 \\
\sigma_{11} &=& x_1+x_4+x_5+x_1x_3+x_3x_5+x_5x_1 \\
\sigma_{12} &=& x_1+x_2+x_3+x_4+x_5+x_3x_4+x_4x_5+x_5x_3.
\end{eqnarray*}
\end{example}

In order to construct a stegoscheme from a systematic code, we need a partition of $\mathbb{F}_2^n$ and a family of decoding maps, one map for each element of the partition.

\begin{proposition}\label{particion}
Let $\mathcal{C}$ be a $[n,u]$ systematic code. Then the sets $(\mathbf{0},\mathbf{v})+\mathcal{C}$,  $\mathbf{v}\in\mathbb{F}_2^{n-u}$, are pairwise disjoint and hence the family $((\mathbf{0},\mathbf{v})+\mathcal{C} : \mathbf{v}\in\mathbb{F}_2^{n-u})$ is a partition of $\mathbb{F}_2^n$.
\end{proposition}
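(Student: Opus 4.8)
The plan is to realize each set $(\mathbf{0},\mathbf{v})+\mathcal{C}$ as a fiber of one fixed map $\mathbb{F}_2^n\rightarrow\mathbb{F}_2^{n-u}$. Since the nonempty fibers of any map partition its domain, and all the relevant fibers are obviously nonempty, the proposition follows immediately.

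First I would invoke the description of systematic codes from Subsection~\ref{SystematicCodes}: with $\mathcal{U}=\{1,\dots,u\}$ and $\mathcal{V}=\{u+1,\dots,n\}$, the code has a generator function $\sigma=\sigma_{\mathcal{C}}:\mathbb{F}_2^u\rightarrow\mathbb{F}_2^{n-u}$ with $\mathcal{C}=\{(\mathbf{a},\sigma(\mathbf{a})):\mathbf{a}\in\mathbb{F}_2^u\}$, and I would define $\phi:\mathbb{F}_2^n\rightarrow\mathbb{F}_2^{n-u}$ by $\phi(\mathbf{a},\mathbf{b})=\mathbf{b}+\sigma(\mathbf{a})$, using the block notation $\mathbf{x}=(\mathbf{a},\mathbf{b})$ of that subsection. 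The core step is then the identity $(\mathbf{0},\mathbf{v})+\mathcal{C}=\phi^{-1}(\mathbf{v})$ for every $\mathbf{v}\in\mathbb{F}_2^{n-u}$, which unravels directly: $(\mathbf{a},\mathbf{b})\in(\mathbf{0},\mathbf{v})+\mathcal{C}$ iff $(\mathbf{a},\mathbf{b}+\mathbf{v})\in\mathcal{C}$ iff $\mathbf{b}+\mathbf{v}=\sigma(\mathbf{a})$ iff $\phi(\mathbf{a},\mathbf{b})=\mathbf{v}$. Hence the sets in question are exactly the fibers of $\phi$; they are pairwise disjoint, each is nonempty since $\mathcal{C}\neq\emptyset$, and therefore their union is all of $\mathbb{F}_2^n$. (As a sanity check one can also argue directly: if $(\mathbf{0},\mathbf{v})+\mathbf{c}=(\mathbf{0},\mathbf{v}')+\mathbf{c}'$ with $\mathbf{c},\mathbf{c}'\in\mathcal{C}$, comparing $\mathcal{U}$-coordinates gives $\pi_{\mathcal{U}}(\mathbf{c})=\pi_{\mathcal{U}}(\mathbf{c}')$, hence $\mathbf{c}=\mathbf{c}'$ by the defining property of systematic codes, and comparing $\mathcal{V}$-coordinates then gives $\mathbf{v}=\mathbf{v}'$; a count $2^{n-u}\cdot\#\mathcal{C}=2^{n-u}\cdot 2^u=2^n$ closes the argument.)

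I do not expect a genuine obstacle here; the one subtlety worth flagging is that $\mathcal{C}$ need not be a linear subspace of $\mathbb{F}_2^n$, so the familiar fact that ``distinct translates of a subgroup are disjoint'' is not available verbatim. The argument must instead lean on the defining feature of a systematic code --- that $\pi_{\mathcal{U}}$ restricts to a bijection $\mathcal{C}\to\mathbb{F}_2^u$, equivalently that $\mathcal{C}$ is the graph of $\sigma$ --- which is precisely what makes $\phi$ well defined and its fibers the desired translates.
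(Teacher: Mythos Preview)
Your proof is correct. Your parenthetical ``sanity check'' is in fact exactly the paper's argument: assume two translates meet, project onto $\mathcal{U}$ to force $\mathbf{c}_1=\mathbf{c}_2$ by the systematic property, then conclude $\mathbf{v}_1=\mathbf{v}_2$. Your primary route via the map $\phi(\mathbf{a},\mathbf{b})=\mathbf{b}+\sigma(\mathbf{a})$ is a mild repackaging rather than a genuinely different idea --- it is precisely the syndrome map $s$ that the paper introduces in the paragraph \emph{following} the proposition --- so what you gain is that the partition statement and the definition of the syndrome come out in one stroke, whereas the paper proves disjointness first and then observes that the resulting partition defines $s$. Either way the content is the same: the systematic property makes $\pi_{\mathcal{U}}|_{\mathcal{C}}$ injective, which is the only substantive point.
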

\begin{proof}
If $(\mathbf{0},\mathbf{v}_1)+\mathbf{c}_1=(\mathbf{0},\mathbf{v}_2)+\mathbf{c}_2$ for some $\mathbf{v}_1,\mathbf{v}_2\in\mathbb{F}_2^{n-u}$ and $\mathbf{c}_1,\mathbf{c}_2\in\mathbb{F}_2^n$, then \mbox{$\pi_{\mathcal{U}}(\mathbf{c}_1)= \pi_{\mathcal{U}}((\mathbf{0},\mathbf{v}_1)+\mathbf{c}_1)=\pi_{\mathcal{U}}((\mathbf{0},\mathbf{v}_2)+\mathbf{c}_2)=\pi_{\mathcal{U}}(\mathbf{c}_2)$}. Then $\mathbf{c}_1=\mathbf{c}_2$ and consequently $\mathbf{v}_1=\mathbf{v}_2$.
\end{proof}

In general the translates $\mathbf{x}+\mathcal{C}$ are not pairwise
disjoint when $\mathbf{x}$ runs over the whole space
$\mathbb{F}_2^n$. In that case these sets do not give a partition of
$\mathbb{F}_2^n$ and they are not useful for decoding purposes. Anyway
the partition given by proposition \ref{particion} allows us to define
a {\em syndrome map} $s:\mathbb{F}_2^n\rightarrow \mathbb{F}_2^{n-u}$
as follows: define $s(\mathbf{x})=\mathbf{v}$ if $\mathbf{x}\in
(\mathbf{0},\mathbf{v})+\mathcal{C}$. The systematic property leads us
to compute $s(\mathbf{x})$ efficiently: if
$\mathbf{x}=(\mathbf{u},\mathbf{w})$ then we can write
$(\mathbf{u},\mathbf{w})=(\mathbf{u},\sigma(\mathbf{u}))+(\mathbf{0},s(\mathbf{x}))$
and hence $s(\mathbf{x})=\mathbf{w}-\sigma(\mathbf{u})$. If $\sigma$
is a linear map, and hence the code $\mathcal{C}$ is linear, then
$s(\mathbf{x})=\mathbf{x} H^T$ is the usual syndrome for linear codes.

A {\em decoding map} for a general code $\mathcal{C}\subseteq\mathbb{F}_2^n$ is a mapping $\mbox{dec}:\mathbb{F}_2^n\rightarrow\mathcal{C}$ such that for every $\mathbf{x}\in\mathbb{F}_2^n$, $\mbox{dec}(\mathbf{x})$ is the closest word to $\mathbf{x}$ in $\mathcal{C}$. If more than one of such words exists, simply choose one of them at random. 

\begin{proposition}\label{decoclases}
Let $\mathcal{C}\subseteq\mathbb{F}_2^n$ be a systematic code and $\mathbf{z}\in\mathbb{F}_2^n$. If $\mbox{\rm dec}$ is a decoding map for $\mathcal{C}$ then $\mbox{\rm dec}_\mathbf{z}(\mathbf{x})=\mathbf{z}+\mbox{\rm dec}(\mathbf{x}-\mathbf{z})$ is a decoding map for the code $\mathbf{z}+\mathcal{C}$.
\end{proposition}
\begin{proof}
Clearly $\mbox{dec}_\mathbf{z}(\mathbf{x})\in \mathbf{z}+\mathcal{C}$. If there exists $\mathbf{z}+\mathbf{c}\in\mathbf{z}+\mathcal{C}$ such that \mbox{$d(\mathbf{x},\mathbf{z}+\mathbf{c})< d(\mathbf{x},\mbox{dec}_\mathbf{z}(\mathbf{x}))$}, then $d(\mathbf{x}-\mathbf{z},\mathbf{c})< d(\mathbf{x}-\mathbf{z},\mbox{dec}(\mathbf{x}-\mathbf{z}))$, which contradicts that dec is a decoding map for $\mathcal{C}$.
\end{proof}

\subsection{Stegoschemes from systematic codes}

Let $\mathcal{C}$ be a $[n,n-u]$ systematic code and let ${\rm dec}$ be a decoding map for $\mathcal{C}$. 
According to propositions \ref{particion} and \ref{decoclases}, we obtain a stegoscheme $\mathcal{S}=\mathcal{S}(\mathcal{C})$ from $\mathcal{C}$, whose embedding and recovering maps are
$$
{\rm emb}:\mathbb{F}_2^n\times\mathbb{F}_2^{u}\rightarrow\mathbb{F}_2^n \; , \; 
{\rm emb}(\mathbf{c},\mathbf{m}) = {\rm dec}_{(\mathbf{0},\mathbf{m})}(\mathbf{c})= 
(\mathbf{0},\mathbf{m})+{\rm dec}(\mathbf{c}-(\mathbf{0},\mathbf{m}))
$$  
$$
{\rm rec}:\mathbb{F}_2^n\rightarrow \mathbb{F}_2^{u} \; , \; 
{\rm rec}(\mathbf{x})=s(\mathbf{x}).
$$
By definition of syndrome it holds that ${\rm rec}({\rm emb}(\mathbf{c},\mathbf{m})) = s({\rm dec}_{(\mathbf{0},\mathbf{m})}(\mathbf{c}))=\mathbf{m}$ for all $\mathbf{c}\in\mathbb{F}_2^n$ and $\mathbf{m}\in\mathbb{F}_2^{u}$. Compare this with the usual expression ${\rm emb}(\mathbf{c},\mathbf{m})= \mathbf{c}-{\rm cl}(\mathbf{c}H^T-\mathbf{m})$  for  linear codes.  We note that to perform this embedding it is necessary to have a table with all syndromes and cosets leaders, even if the decoding map used does not require them. Therefore, the systematic formulation can be useful even using linear codes. 

Let us study the parameters of $\mathcal{S}(\mathcal{C})$ in relation with those of $\mathcal{C}$. The cover length is $n$ and the embedding capacity $\codim=u$.
To compute its embedding radius and average number of embedding changes we first need to recall some concepts from coding theory.  Given a general code $\mathcal{D}$, its {\em covering radius} is defined as the maximum distance from a vector $\mathbf{x}\in\mathbb{F}_2^n$ to $\mathcal{D}$,
$\rho(\mathcal{D})=\max\{ d(\mathbf{x},\mathcal{D}) : \mathbf{x}\in\mathbb{F}_2^n  \}$,
where $d(\mathbf{x},\mathcal{D})=\min\{ d(\mathbf{x},\mathbf{c}) : \mathbf{c}\in\mathcal{D} \}$. The {\em average radius} of $\mathcal{D}$, $\tilde{\rho}(\mathcal{D})$ is the average distance from a vector $\mathbf{x}\in\mathbb{F}_2^n$ to $\mathcal{D}$
$$
\tilde{\rho}(\mathcal{D})=\frac{1}{2^n}\sum_{\mathbf{x}\in\mathbb{F}_2^n} d(\mathbf{x},\mathcal{D}). 
$$ 
If $\mathcal{D}$ is linear then both parameters can be obtained from
the {\em coset leader distribution} of $\mathcal{D}$, that is the
sequence $\alpha_0,\dots,\alpha_n$, where $\alpha_i$ is the number of
coset leaders of weight $i$. Clearly $\alpha_i\le
\binom{n}{i}$. When $i\le t=\lfloor
(d(\mathcal{D})-1)/2\rfloor$ then all vectors of weight $i$ are
leaders hence we get equality, $\alpha_i=\binom{n}{i}$. For
$i>t$ the computation of $\alpha_i$ is a classical problem, considered
difficult. For nonlinear $\mathcal{D}$ the coset leader distribution
may be generalized to the distribution of distances to the code,
defined as
$$
\alpha_i=\frac{1}{\# \mathcal{D}} \# \{\mathbf{x}\in\mathbb{F}_2^n :
d(\mathbf{x}, \mathcal{D})=i  \}
$$ 
If $\mathcal{D}$ is linear then both definitions of $\alpha_i$'s coincide.
A similar reasoning as above shows that the property $\alpha_i\le
\binom{n}{i}$ with equality when $i\le t=\lfloor
(d(\mathcal{D})-1)/2\rfloor$ remains true for all codes.  The covering
radius of $\mathcal{D}$ is the maximum $i$ such that $\alpha_i\neq 0$
and the average radius is given by
$$
\tilde{\rho}(\mathcal{D})=\frac{\# \mathcal{D}}{2^n}\sum_{i=0}^{n}
i\alpha_i.
$$
If  $\mathcal{C}$ is $[n,n-u]$ systematic, then  for all $\mathbf{v}\in\mathbb{F}_2^{u}$ and $\mathbf{x}\in\mathbb{F}_2^n$ we have
\mbox{$d(\mathbf{x},(\mathbf{0},\mathbf{v})+\mathcal{D})=d(\mathbf{x}-(\mathbf{0},\mathbf{v}),\mathcal{D})$}. Thus 
all the
translates \mbox{$((\mathbf{0},\mathbf{v})+\mathcal{D} :
\mathbf{v}\in\mathbb{F}_2^{u})$} have the same  distribution of
distances to the code and hence the same average radius and covering
radius. As a consequence we have the following result, which is well known for stegoschemes comming from linear codes.

\begin{proposition}
Let $\mathcal{C}$ be a $[n,n-u]$ systematic code and let $\mathcal{S}$ the stegoscheme obtained from $\mathcal{C}$. Then
the embedding radius of $\mathcal{S}$ is the covering radius of $\mathcal{C}$ and
the average number of embedding changes $R_a(\mathcal{S})$ is the average radius of $\mathcal{C}$.
\end{proposition}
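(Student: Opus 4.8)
The plan is to unwind the definition of $\mathcal{S}=\mathcal{S}(\mathcal{C})$ and reduce both quantities to the fact, established just before the statement, that all translates $(\mathbf{0},\mathbf{v})+\mathcal{C}$ with $\mathbf{v}\in\mathbb{F}_2^{u}$ share the same distribution of distances to the code, hence the same covering radius $\rho(\mathcal{C})$ and the same average radius $\tilde{\rho}(\mathcal{C})$. Recall that the embedding radius of $\mathcal{S}$ is $\max\{ d(\mathbf{c},{\rm emb}(\mathbf{c},\mathbf{m})) : \mathbf{c}\in\mathbb{F}_2^n,\ \mathbf{m}\in\mathbb{F}_2^{u}\}$ and that the average number of embedding changes is $R_a(\mathcal{S})=\frac{1}{2^{n+u}}\sum_{\mathbf{m}}\sum_{\mathbf{c}} d(\mathbf{c},{\rm emb}(\mathbf{c},\mathbf{m}))$, the mean being taken with respect to the uniform distribution on covers and messages, consistent with the equiprobability assumption of the Introduction. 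The first step is the key identity: by construction ${\rm emb}(\mathbf{c},\mathbf{m})={\rm dec}_{(\mathbf{0},\mathbf{m})}(\mathbf{c})$, and by proposition~\ref{decoclases} the map ${\rm dec}_{(\mathbf{0},\mathbf{m})}$ is a minimum-distance decoding map for the code $(\mathbf{0},\mathbf{m})+\mathcal{C}$; therefore ${\rm emb}(\mathbf{c},\mathbf{m})$ is a closest word to $\mathbf{c}$ in that translate and $d(\mathbf{c},{\rm emb}(\mathbf{c},\mathbf{m}))=d(\mathbf{c},(\mathbf{0},\mathbf{m})+\mathcal{C})$, regardless of which closest word the (possibly randomized) decoder returns in case of ties.

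Next I would carry out the two elementary computations. For the embedding radius, take the maximum over $\mathbf{c}$ first: $\max_{\mathbf{c}} d(\mathbf{c},(\mathbf{0},\mathbf{m})+\mathcal{C})=\rho((\mathbf{0},\mathbf{m})+\mathcal{C})=\rho(\mathcal{C})$ for every $\mathbf{m}$ by translate-invariance, so the remaining maximum over $\mathbf{m}$ leaves $\rho(\mathcal{C})$ unchanged. For the average number of changes, fix $\mathbf{m}$ and sum over $\mathbf{c}$: $\frac{1}{2^n}\sum_{\mathbf{c}} d(\mathbf{c},(\mathbf{0},\mathbf{m})+\mathcal{C})=\tilde{\rho}((\mathbf{0},\mathbf{m})+\mathcal{C})=\tilde{\rho}(\mathcal{C})$; averaging this constant over the $2^u$ messages $\mathbf{m}$ gives $R_a(\mathcal{S})=\tilde{\rho}(\mathcal{C})$.

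There is essentially no hard step: the argument is a transcription of the well-known linear-code statement to the systematic setting, and its only content is the identification of $d(\mathbf{c},{\rm emb}(\mathbf{c},\mathbf{m}))$ with $d(\mathbf{c},(\mathbf{0},\mathbf{m})+\mathcal{C})$ together with the translate-invariance of the distance distribution supplied by proposition~\ref{particion} and the discussion preceding the statement. The only point requiring a little care is to make explicit which probability model underlies ``average number of embedding changes'' — uniform on $\mathbb{F}_2^n\times\mathbb{F}_2^u$ — since the factorization of the double sum over $\mathbf{m}$ and $\mathbf{c}$ relies on it; once this convention is fixed the computation is immediate.
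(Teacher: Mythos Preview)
Your proof is correct and follows the same approach as the paper: the key identity $d(\mathbf{c},{\rm emb}(\mathbf{c},\mathbf{m}))=d(\mathbf{c},(\mathbf{0},\mathbf{m})+\mathcal{C})=d(\mathbf{c}-(\mathbf{0},\mathbf{m}),\mathcal{C})$ is exactly what the paper invokes, and both statements then follow at once. The paper's version is a one-line proof using the second form of this identity (so that as $\mathbf{c}$ ranges over $\mathbb{F}_2^n$ so does $\mathbf{c}-(\mathbf{0},\mathbf{m})$), whereas you spell out the translate-invariance and the averaging more explicitly; the content is the same.
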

\begin{proof}
By definition of decoding map, the number of changes when embedding a message $\mathbf{m}$ into a vector $\mathbf{x}$ is $d(\mathbf{x},{\rm emb(\mathbf{x},\mathbf{m})})=d(\mathbf{x}-(\mathbf{0},\mathbf{m}),\mathcal{C})$ and both statements hold.
\end{proof}

\begin{example}
The Nadler code of Example \ref{Nadler} is 2-error correcting, hence
$\alpha_0=1$, $\alpha_1=12$, $\alpha_2=66$. Other values of $\alpha$ are obtained by computer search: $\alpha_3=46$, $\alpha_4=3$, and $\alpha_i=0$ for $i=5,\dots,12$. Then $\tilde{\rho}(\mathcal{N})=2.296875$. The stegoscheme derived from $\mathcal{N}$ allows to embed 7 bits of information into a cover vector of 12 bits, by changing $2.296875$ of them on average and 4 of them at most. 
\end{example}

\subsection{Stegoschemes, resilient functions and orthogonal arrays}\label{ResilitientOA}

Systematic codes allows us to make a connection of stegoschemes
with two objects of known importance in information theory: resilient
functions and orthogonal arrays.  A function $f:\mathbb{F}_2^n
\rightarrow\mathbb{F}_2^\codim$ is called {\em t-resilient} for some
integer $t\le n$, if for every $\mathcal{T}\subseteq \{1,\dots,n\}$
such that $\# \mathcal{T}=t$ and every $\mathbf{t}\in \mathbb{F}_2^t$,
all possible outputs of $f(\mathbf{x})$ with
$\pi_{\mathcal{T}}(\mathbf{x})=\mathbf{t}$ are equally likely to
occur, that is if for all $\mathbf{y}\in\mathbb{F}_2^{\codim}$ we have
$$
\mbox{prob} (f(\mathbf{x})=\mathbf{y} \; | \; \pi_{\mathcal{T}}(\mathbf{x})=\mathbf{t})=\frac{1}{2^\codim}
$$ 
(see the relation to recovering maps of stegoschemes).
Resilient functions play an important role in cryptography, and are closely related to orthogonal arrays  \cite{stinson1}. An {\em orthogonal array} $OA_\lambda(t,n)$ is a $\lambda 2^t\times n$ array over $\mathbb{F}_2$, such that in any $t$ columns every one of the possible $2^t$ vectors of $\mathbb{F}_2^t$ occurs in exactly $\lambda$ rows. A {\em large set} of orthogonal arrays is a set of $2^{n-t}/\lambda$ arrays  $OA_\lambda(t,n)$ such that every vector of $\mathbb{F}_2^n$ occurs once as a row of one $OA$ in the set. Then, by considering the rows of these arrays as vectors of $\mathbb{F}_2^n$, a large set of $OA$ gives a partition of $\mathbb{F}_2^n$, see \cite{stinson1}.

There is a fruitful connection between orthogonal arrays and codes, see \cite{MacSlo} Chapter 5, section 5.
If $\mathcal{C}$ is a linear $[n,n-\codim]$ code then, according to proposition \ref{solutioniff}, the array having the codewords of $\mathcal{C}$ as rows is an $OA_{2^{n-\codim-d^{\perp}+1}}(d^\perp-1,n)$. Delsarte observed that a similar result holds also for nonlinear codes. Of course if $\mathcal{C}$ is not linear then the dual code does not exist, but the dual distance can be defined  from the distance distribution of $\mathcal{C}$ via the dual transforms as follows \cite{MacSlo}: The distance distribution of $\mathcal{C}$
is defined to be the sequence $A_0,\dots,A_n$, where
$$
A_i=\frac{1}{\# \mathcal{C}} \# \{ (\mathbf{x},\mathbf{y})\in\mathcal{C}^2 : d(\mathbf{x},\mathbf{y})=i  \}
$$
$i=0,\dots,n$. The dual distance distribution of $\mathcal{C}$ is $A_0^\perp,\dots,A_n^\perp$, where 
$$
A_i^\perp=\frac{1}{\# \mathcal{C}} \sum_{j=0}^n A_jK_i(j)
$$ 
and $K_i(x)$ is the $i$-th Krautchouk polynomial 
$$
K_i(x)=\sum_{j=0}^i (-1)^j \binom{x}{j}\binom{n-x}{i-j}.
$$
If $\mathcal{C}$ is linear then $A_0^\perp,\dots,A_n^\perp$ is the distance distribution of $\mathcal{C}^\perp$. 
If $\mathcal{C}$ is $[n,n-u]$ systematic, then the array
having the codewords of $\mathcal{C}$ as rows is an
$OA_{2^{n-u-d^\perp+1}}(d^\perp-1,n)$. Furthermore in this case all
the translates $(\mathbf{0},\mathbf{v})+\mathcal{C}$ have the same
distance distribution and hence the same dual distance.

\begin{proposition}\label{OAres}
Let $\mathcal{C}$ be a systematic $[n,n-u]$ code with generator function $\sigma$ and dual distance $d^\perp$. For any $\mathbf{v}\in\mathbb{F}_2^{u}$ let $M_{\mathbf{v}}$ be the array having the words of $(\mathbf{0},\mathbf{v})+\mathcal{C}$ as rows. Then \newline
(a) The set $\{M_{\mathbf{v}} \; | \; \mathbf{v}\in\mathbb{F}_2^{u}\}$ is a large set of $OA_{2^{n-u-d^\perp+1}}(d^\perp-1,n)$. \newline
(b) The syndrome map $s(\mathbf{u},\mathbf{w})=\mathbf{w}-\sigma(\mathbf{u})$ is an $(d^\perp-1)$-resilient function. 
\end{proposition}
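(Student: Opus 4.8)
The plan is to reduce both (a) and (b) to the already-established facts about systematic codes, the key one being that the array having the codewords of a systematic $[n,n-u]$ code $\mathcal{C}$ as rows is an $OA_{2^{n-u-d^\perp+1}}(d^\perp-1,n)$, together with the observation recorded just before the statement that every translate $(\mathbf{0},\mathbf{v})+\mathcal{C}$ has the same distance distribution, hence the same dual distance $d^\perp$. First I would prove (a). Each $M_{\mathbf{v}}$ is the array of the codewords of the translate $(\mathbf{0},\mathbf{v})+\mathcal{C}$; since translation by a fixed vector is a bijection preserving Hamming distances, $(\mathbf{0},\mathbf{v})+\mathcal{C}$ is again a systematic code with the same distance distribution and dual distance as $\mathcal{C}$, so by the cited fact each $M_{\mathbf{v}}$ is itself an $OA_{2^{n-u-d^\perp+1}}(d^\perp-1,n)$ (note $\lambda=2^{n-u-d^\perp+1}$ and $t=d^\perp-1$, so $2^{n-t}/\lambda = 2^{n-d^\perp+1}/2^{n-u-d^\perp+1}=2^{u}$, the correct number of arrays in a large set). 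By Proposition~\ref{particion} the family $((\mathbf{0},\mathbf{v})+\mathcal{C} : \mathbf{v}\in\mathbb{F}_2^{u})$ is a partition of $\mathbb{F}_2^n$, so every vector of $\mathbb{F}_2^n$ occurs exactly once as a row of exactly one $M_{\mathbf{v}}$. This is precisely the definition of a large set of orthogonal arrays, giving (a).

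Next I would deduce (b) from (a) together with the dictionary between large sets of $OA$'s and resilient functions (this is the standard equivalence of \cite{stinson1}, which I would invoke). Concretely, fix $\mathcal{T}\subseteq\{1,\dots,n\}$ with $\#\mathcal{T}=d^\perp-1$, fix $\mathbf{t}\in\mathbb{F}_2^{d^\perp-1}$, and fix $\mathbf{y}\in\mathbb{F}_2^{u}$. By definition of the syndrome map, $s(\mathbf{x})=\mathbf{y}$ exactly when $\mathbf{x}\in(\mathbf{0},\mathbf{y})+\mathcal{C}$, i.e.\ when $\mathbf{x}$ is a row of $M_{\mathbf{y}}$. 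Since $M_{\mathbf{y}}$ is an $OA_{\lambda}(d^\perp-1,n)$ with $\lambda=2^{n-u-d^\perp+1}$, among its $2^{n-u}$ rows exactly $\lambda$ of them have $\pi_{\mathcal{T}}$ equal to $\mathbf{t}$. On the other hand, the total number of $\mathbf{x}\in\mathbb{F}_2^n$ with $\pi_{\mathcal{T}}(\mathbf{x})=\mathbf{t}$ is $2^{n-(d^\perp-1)}=2^{n-d^\perp+1}$, and these are distributed among the $2^{u}$ blocks $M_{\mathbf{v}}$. Hence
\[
\mbox{prob}\bigl(s(\mathbf{x})=\mathbf{y}\ \big|\ \pi_{\mathcal{T}}(\mathbf{x})=\mathbf{t}\bigr)
=\frac{\#\{\mathbf{x}\in M_{\mathbf{y}} : \pi_{\mathcal{T}}(\mathbf{x})=\mathbf{t}\}}{\#\{\mathbf{x}\in\mathbb{F}_2^n : \pi_{\mathcal{T}}(\mathbf{x})=\mathbf{t}\}}
=\frac{2^{n-u-d^\perp+1}}{2^{n-d^\perp+1}}=\frac{1}{2^{u}},
\]
which is the defining property of a $(d^\perp-1)$-resilient function (with $r=u$). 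One should also check the degenerate cases: if $d^\perp-1=0$ the statement is vacuous, and if $d^\perp-1>u$ one still needs the $OA$ count to make sense, which it does since $d^\perp\le n-u+1$ by the Singleton bound applied as in Theorem~\ref{solutioniff}.

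The main obstacle is not any single computation — each step is a short counting argument — but rather making sure the two ``folklore'' inputs are used correctly: (i) that a \emph{systematic} (not merely linear) code of dual distance $d^\perp$ really does yield an $OA$ of strength $d^\perp-1$ in the sense of Delsarte's theorem via the dual distance distribution defined through Krautchouk transforms, and (ii) that the large-set/resilient-function correspondence of Stinson applies verbatim here. For (i) I would point to \cite{MacSlo}, Chapter~5, Section~5, and for (ii) to \cite{stinson1}; the content of the proposition is then essentially the bookkeeping that ties proposition~\ref{particion}, the shared distance distribution of the translates, and these two external facts together. The only place where care is genuinely needed is verifying that the index $\lambda$ and the size $2^{n-t}/\lambda$ of the large set come out to the claimed powers of two, which is the short arithmetic displayed above.
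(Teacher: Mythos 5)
Your proposal is correct and follows essentially the same route as the paper: part (a) via Delsarte's theorem applied to $\mathcal{C}$ and its translates (which share the distance distribution and hence the dual distance) together with the partition of Proposition~\ref{particion}, and part (b) as the standard counting consequence of (a) in the spirit of Stinson's correspondence. Your version merely spells out the $\lambda$-arithmetic and the conditional-probability computation that the paper leaves implicit.
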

\begin{proof}
Let $\mathcal{T}\subseteq \{1,\dots,n\}$ with $\#\mathcal{T}=d^\perp-1$ and $\mathbf{t}\in\mathbb{F}_2^t$. (a) According to Delsarte's theorem, every one of the possible $2^t$ vectors  $\mathbf{t}$ occurs in exactly $2^{n-u-d^\perp+1}$ rows of $\pi_{\mathcal{T}}(\mathcal{C})$. Then the same happens in each of the translates $(\mathbf{0},\mathbf{v})+\mathcal{C}$. 
(b) As a consequence of (a), all possible outputs of $s(\mathbf{x})$
with $\pi_{\mathcal{T}}(\mathbf{x})=\mathbf{t}$ are equally likely to
occur, \cite{stinson1}.
\end{proof}

\subsection{Locked positions with systematic codes}

Let us return to the problem of embedding with locked positions.
Let $\mathbf{c}\in\mathbb{F}_2^n$ be a cover vector and
$\mathbf{m}\in\mathbb{F}_2^u$ be the secret we want to embed into
$\mathbf{c}$. There is a set $\mathcal{W}\subseteq\{1,\dots,n\}$ of
$n-\dry$ locked positions that cannot be altered during the embedding
process. Consider a systematic $[n,n-u]$ code $\mathcal{C}$ and let
$s$ be the syndrome of $\mathcal{C}$ defined in section
\ref{SystematicCodes}. As in the case of linear codes, the embedding
is obtained as a syndrome,
$\rm{emb}(\mathbf{c},\mathbf{m})=\mathbf{x}$ with

\[
[SS]:\left\{ \begin{array}{rl}
s(\mathbf{x})&=\mathbf{m},\\
x_i&= c_i \; \mbox{ if $i\in\mathcal{W}$}
\end{array} \right.
\]
\noindent
Also as in the case of linear codes we can ask for the minimum possible number of dry (free) positions required to ensure a solution of $[SS]$, the wet threshold of $\mathcal{C}$. Such a solution exists if and only if
$\pi_{\mathcal{W}}(\mathbf{c}+(\mathbf{0},\mathbf{m}))\in\pi_{\mathcal{W}}(\mathcal{C})$. In that case, if 
$\mathbf{y}\in\mathcal{C}$ verifies $\pi_{\mathcal{W}}(\mathbf{y})=\pi_{\mathcal{W}}(\mathbf{c}+(\mathbf{0},\mathbf{m}))$, then $\mathbf{x}=\mathbf{y}+(\mathbf{0},\mathbf{m})$ is a solution.

\begin{proposition}\label{SolutionS}
If $\dry\ge n-d^{\perp}+1$ then the system $[SS]$ has a solution for all $\mathbf{c}\in\mathbb{F}_2^n$,  $\mathbf{m}\in\mathbb{F}_2^u$ and  $\mathcal{W}\subseteq\{1,\dots,n\}$ with $\#\mathcal{W}=n-\dry$. In this case, $[SS]$ has exactly $2^{\dry-u}$ solutions.
\end{proposition}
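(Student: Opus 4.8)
The plan is to mimic the proof of Theorem~\ref{solutioniff}, but since a nonlinear code has no generator matrix, the rank argument there must be replaced by the combinatorial orthogonal-array description furnished by Proposition~\ref{OAres}. Recall from the discussion preceding the statement that $[SS]$ has a solution if and only if $\pi_{\mathcal{W}}(\mathbf{c}+(\mathbf{0},\mathbf{m}))\in\pi_{\mathcal{W}}(\mathcal{C})$, and that writing $\mathbf{x}=\mathbf{y}+(\mathbf{0},\mathbf{m})$ sets up a bijection between the solutions $\mathbf{x}$ of $[SS]$ and the codewords $\mathbf{y}\in\mathcal{C}$ with $\pi_{\mathcal{W}}(\mathbf{y})=\pi_{\mathcal{W}}(\mathbf{c}+(\mathbf{0},\mathbf{m}))$. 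So everything reduces to understanding, as a multiset, the projection $\pi_{\mathcal{W}}(\mathcal{C})$.

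First I would invoke Proposition~\ref{OAres}(a): the array whose rows are the codewords of $\mathcal{C}$ is an $OA_{\lambda}(d^\perp-1,n)$ with $\lambda=2^{n-u-d^\perp+1}$. The hypothesis $\dry\ge n-d^\perp+1$ says exactly $\#\mathcal{W}=n-\dry\le d^\perp-1$, so I can choose a superset $\mathcal{W}'\supseteq\mathcal{W}$ with $\#\mathcal{W}'=d^\perp-1$. By the orthogonal-array property every vector of $\mathbb{F}_2^{d^\perp-1}$ occurs exactly $\lambda$ times among the rows of $\pi_{\mathcal{W}'}(\mathcal{C})$; projecting these rows further onto the coordinates of $\mathcal{W}$ shows that every vector of $\mathbb{F}_2^{\#\mathcal{W}}$ occurs in the multiset $\pi_{\mathcal{W}}(\mathcal{C})$ exactly $\lambda\,2^{(d^\perp-1)-\#\mathcal{W}}$ times.

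In particular $\pi_{\mathcal{W}}(\mathcal{C})=\mathbb{F}_2^{n-\dry}$, so the solvability condition $\pi_{\mathcal{W}}(\mathbf{c}+(\mathbf{0},\mathbf{m}))\in\pi_{\mathcal{W}}(\mathcal{C})$ holds for every choice of $\mathbf{c}\in\mathbb{F}_2^n$, $\mathbf{m}\in\mathbb{F}_2^u$ and $\mathcal{W}$, which proves existence. For the count, the bijection above identifies the number of solutions with the number of codewords of $\mathcal{C}$ having a prescribed value of $\pi_{\mathcal{W}}$, which by the previous computation equals $\lambda\,2^{(d^\perp-1)-(n-\dry)}=2^{n-u-d^\perp+1}\,2^{d^\perp-1-n+\dry}=2^{\dry-u}$. (This is a genuine nonnegative power of $2$: an $OA_\lambda(d^\perp-1,n)$ has $\lambda 2^{d^\perp-1}=\#\mathcal{C}=2^{n-u}$ rows, so $\lambda\ge 1$ forces $d^\perp-1\le n-u$, i.e. $\dry\ge u$.)

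I do not expect a serious obstacle. The only point needing care is that Proposition~\ref{OAres} is stated for exactly $d^\perp-1$ columns, so one must explicitly pass through a size-$(d^\perp-1)$ superset $\mathcal{W}'$ to transfer the uniformity of $\pi_{\mathcal{W}'}(\mathcal{C})$ to the possibly smaller set $\mathcal{W}$; once that is done, the argument is exactly the bookkeeping above, and it is formally the same as the linear case with ``$\mathrm{rank}(G_{\mathcal{W}})=n-\dry$'' reread as ``$\pi_{\mathcal{W}}(\mathcal{C})=\mathbb{F}_2^{n-\dry}$''.
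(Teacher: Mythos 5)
Your proposal is correct and follows essentially the same route as the paper: the paper's (very terse) proof also reduces solvability to $\pi_{\mathcal{W}}(\mathcal{C})=\mathbb{F}_2^{n-\delta}$ and then appeals to Delsarte's theorem via Proposition~\ref{OAres}. You have simply filled in the details the paper leaves implicit, namely the standard reduction of an orthogonal array of strength $d^\perp-1$ to strength $n-\delta\le d^\perp-1$ with index $\lambda 2^{(d^\perp-1)-(n-\delta)}$, and the resulting count $2^{\delta-u}$ of solutions via the bijection $\mathbf{x}\mapsto\mathbf{x}-(\mathbf{0},\mathbf{m})$.
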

\begin{proof}
There exists a solution all $\mathbf{c}\in\mathbb{F}_2^n$,  $\mathbf{m}\in\mathbb{F}_2^u$ and  $\mathcal{W}\subseteq\{1,\dots,n\}$ if and only if $\pi_{\mathcal{W}}(\mathcal{C})=\mathbb{F}_2^{n-\dry}$. The statement follows from Delsarte's theorem and proposition \ref{OAres}.
\end{proof}

Thus the threshold verifies $\tau\le n-d^{\perp}+1$.
If we compare this result with theorem \ref{solutioniff}, we can see a
significant difference: in that case the condition $\dry\ge
n-d^{\perp}+1$ was also necessary. This is due to the existence of dual when the code
$\mathcal{C}$ is linear. If $\mathcal{C}$ is systematic but not
linear, then such dual does not exist and it may happen $\tau< n-d^{\perp}+1$ so that we need less free coordinates than
the required in the linear case. Let us see an example of this situation.

\begin{example}\label{finalex}
The Nadler code has distance distribution $1, 0, 0, 0, 0, 12, 12, 0,
3, 4, 0, 0, 0$ and dual distance distribution
$1,0,0,4,18,36,24,12,21,12,0,0$.  In particular $d^\perp=3$.
When using this code for wet paper purposes, the corresponding system
$[SS]$ has a solution with certainty when the number of locked
coordinates is  $\le d^\perp-1=2$, according to proposition \ref{SolutionS}. A direct inspection shows that
for any $4$ columns of $\mathcal{N}$, every one of the possible $2^4$
vectors of $\mathbb{F}_2^4$ occurs. According to the Bush bound
\cite{ortogonal} this is the maximum possible number of columns for
which this can happen. Then the system $[SS]$ has a solution with
certainty when the number of locked coordinates is at most $4$. Remark
that the minimum distance of a $[12,7]$ linear code is 4, see
\cite{mint}, hence the maximum number of coordinates we can lock   
using linear codes with the same parameters as $\mathcal{N}$ is 3.
\end{example}

The above proposition \ref{SolutionS} and example \ref{finalex} suggest the use of nonlinear systematic codes as wet paper codes. The main drawback of using these codes is in the computational cost of solving $[SS]$. 
Solving a system of boolean equations is a classical and important problem in computational algebra and computer science. There exist several methods available, some of which are very efficient when the number of variables is not too large, see \cite{gao,tesis} and the references therein. Anyway the computational cost of solving $[SS]$ is always greater than that of solving a system of linear equations. In conclusion, the use of nonlinear systematic codes can be an interesting option when the added security gained through a greater number of locked positions offsets the increased computational cost.

\section{Conclusion}\label{Sec:Conclusion}

We have obtained necessary and sufficient
conditions to make sure the embedding process in
the wet paper context. These conditions depend on the dual
distance of the involved code. We also gave a
sufficient condition in the general case of systematic 
codes and provided the exact number of solutions. Finally, we showed that  
systematic codes can be good candidates in the design of wet paper stegoschemes.


\end{document}